\def\frac#1#2{{\textstyle{#1\over#2}}}
\DeclareSymbolFont{AMSb}{U}{msb}{m}{n}
\DeclareMathSymbol{\Natural}{\mathbin}{AMSb}{"4E}
\DeclareMathSymbol{\Integer}{\mathbin}{AMSb}{"5A}
\DeclareMathSymbol{\Real}{\mathbin}{AMSb}{"52}
\DeclareMathSymbol{\Rational}{\mathbin}{AMSb}{"51}
\DeclareMathSymbol{\Imaginary}{\mathbin}{AMSb}{"49}
\DeclareMathSymbol{\Complex}{\mathbin}{AMSb}{"43} 
\DeclareMathSymbol{\Disk}{\mathbin}{AMSb}{"44} 
\def\bi{\begin{itemize}}
\def\ei{\end{itemize}}
\def\bd{\begin{description}}
\def\ed{\end{description}}
\def\ben{\begin{enumerate}}
\def\een{\end{enumerate}}
\def\calT{{\mathcal T}}
\def\calV{{\mathcal V}}
\def\pr{{\rm Pr}}
\def\E{{\rm E}}
\def\var{{\rm var}}
\def\cov{{\rm cov}}  
\def\corr{{\rm corr}}
\def\Dto{{\ {\buildrel D\over \longrightarrow}\ }}
\def\2to{{\ {\buildrel 2\over \longrightarrow}\ }}
\def\I1ton{{$I_1,\ldots,I_n$}}
\def\X1ton{{$X_1,\ldots,X_n$}}
\def\Y1ton{{$Y_1,\ldots,Y_n$}}
\def\Z1ton{{$Z_1,\ldots,Z_n$}}
\def\R1ton{{$R_1,\ldots,R_n$}}
\def\e1ton{{$e_1,\ldots,e_n$}}
\def\t1ton{{$t_1,\ldots,t_n$}}
\def\x1ton{{$x_1,\ldots,x_n$}}
\def\y1ton{{$y_1,\ldots,y_n$}}
\def\z1ton{{$z_1,\ldots,z_n$}}
\newtheorem{defn}{Definition}
\newtheorem{prop}[defn]{Proposition}
\begin{document}
\thispagestyle{empty}
\baselineskip=28pt
\vskip 5mm
\begin{center} 
{\Large{\bf Vecchia Likelihood Approximation for Accurate and Fast Inference in Intractable Spatial Extremes Models}}
\end{center}

\baselineskip=12pt
\vskip 5mm

\begin{center}
\large
Rapha\"el Huser$^1$, Michael L. Stein$^2$ and Peng Zhong$^1$
\end{center}

\footnotetext[1]{
\baselineskip=10pt Statistics Program, Computer, Electrical and Mathematical Sciences and Engineering (CEMSE) Division, King Abdullah University of Science and Technology (KAUST), Thuwal 23955-6900, Saudi Arabia. E-mails: raphael.huser@kaust.edu.sa; peng.zhong@kaust.edu.sa}
\footnotetext[2]{
\baselineskip=10pt Department of Statistics, Rutgers University, Piscataway, NJ 08854 United States of America. E-mail: ms2870@stat.rutgers.edu}

\baselineskip=17pt
\vskip 4mm
\centerline{\today}
\vskip 6mm

\begin{center}
{\large{\bf Abstract}}
\end{center}
Max-stable processes are the most popular models for high-impact spatial extreme events, as they arise as the only possible limits of spatially-indexed block maxima. However, likelihood inference for such models suffers severely from the curse of dimensionality, since the likelihood function involves a combinatorially exploding number of terms. In this paper, we propose using the Vecchia approximation, which conveniently decomposes the full joint density into a linear number of low-dimensional conditional density terms based on well-chosen conditioning sets designed to improve and accelerate inference in high dimensions. Theoretical asymptotic relative efficiencies in the Gaussian setting and simulation experiments in the max-stable setting show significant efficiency gains and computational savings using the Vecchia likelihood approximation method compared to traditional composite likelihoods. Our application to extreme sea surface temperature data at more than a thousand sites across the entire Red Sea further demonstrates the superiority of the Vecchia likelihood approximation for fitting complex models with intractable likelihoods, delivering significantly better results than traditional composite likelihoods, and accurately capturing the extremal dependence structure at lower computational cost. 
\baselineskip=16pt

\par\vfill\noindent
{\bf Keywords:} Asymptotic relative efficiency; Composite likelihood; Gaussian process; Max-stable process; Vecchia approximation.\\

\pagenumbering{arabic}
\baselineskip=24pt

\newpage


\section{Introduction}\label{sec:Introduction}
Max-stable models have been used extensively for describing the dependence structure in multivariate and spatial extremes \citep{Padoan.etal:2010,Segers:2012,Davis.etal:2013a,deCarvalho.Davison:2014,Huser.Davison:2014a,Huser.Genton:2016}. They are natural models to use since they are characterized by the max-stability property, which arises in limiting joint distributions for block maxima with block size tending to infinity; see the reviews by \citet{Davison.etal:2012}, \citet{Davison.Huser:2015} and \citet{Davison.etal:2019}.

However, likelihood-based inference for high-dimensional max-stable distributions is computationally prohibitive \citep{Padoan.etal:2010,Castruccio.etal:2016}. Although the likelihood function has a known general expression, it involves a combinatorial explosion of terms, which makes it impossible to evaluate it exactly, even in relatively small dimensions. In classical geostatistics, Gaussian graphical models, which are represented in terms of a conditional independence graph, play a key role for modeling big spatial data as they lead to Gaussian Markov random fields \citep{Rue.Held:2005} with a sparse precision (i.e., inverse covariance) matrix, which are directly linked to certain classes of continuous-space Gaussian stochastic partial equation models \citep{Lindgren.etal:2011}. Thanks to the Hammersley--Clifford Theorem, the joint density of graphical models can be decomposed into lower-dimensional densities according to the underlying graph, thus making computations much faster. In the extremes context, recent work has shown how to build graphical models for multivariate extremes based on high threshold exceedances modeled through the multivariate Pareto distribution \citep{Engelke.Hitz:2020,Engelke.Ivanovs:2021}. However, interestingly, it is possible to show that conditional independence in max-stable models with a continuous joint density already yields full independence \citep{Papastathopoulos.Strokorb:2016}. This implies that non-trivial Markov max-stable models do not exist, and thus, that likelihood-based inference for max-stable processes is not just a challenging task; it is, by nature of the problem, \emph{intrinsically} difficult. In other words, this computational bottleneck is ``built-in'', and cannot be easily bypassed. Nevertheless, viable inference solutions need to be found.
 
For some very specific classes of max-stable models, 
fast methods can still be designed: the likelihood function for the logistic and nested logistic multivariate models can be efficiently computed using a recursive formula (see \citealp{Shi:1995a}, and \citealp{Vettori.etal:2019}), while the hierarchical construction of the Reich--Shaby max-stable spatial process can be exploited to perform Bayesian inference in high dimensions \citep{Reich.Shaby:2012a,Stephenson.etal:2015,Bopp.etal:2021,Vettori.etal:2019}. Apart from these restrictive cases, full likelihood inference for max-stable models is extremely intensive, and this has prevented the use of more flexible max-stable classes, such as the Brown--Resnick \citep{Kabluchko.etal:2009} or extremal-$t$ \citep{Opitz:2013a} processes, in high-dimensional settings. Recent attempts have succeeded in fitting the Brown--Resnick process in dimension $D\approx20$ based on the full likelihood, either using an astute stochastic expectation--maximization algorithm \citep{Huser.etal:2019} or a Markov chain Monte Carlo algorithm in the Bayesian framework \citep{Thibaud.etal:2016,Dombry.etal:2017}. Nevertheless, these approaches remain difficult to apply in higher dimensions. Alternatively, \citet{Stephenson.Tawn:2005} have proposed a full likelihood approach based on the occurrence times of maxima, but \citet{Wadsworth:2015} and \citet{Huser.etal:2016} have found that this is often severely biased in low dependence situations. More recently, \citet{Lenzi.etal:2021} proposed using neural networks for parameter estimation in intractable models, including max-stable processes. They showed that considerable time savings can be obtained, though their machine learning-based approach typically requires the data to be on a regular grid. Moreover, training neural networks for parameter estimation requires model-specific tuning; it becomes very tricky as the number of parameters increases; and, as often the case with machine learning approaches, statistical guarantees are difficult to obtain, especially as far as uncertainty quantification is concerned. 

To make inference for max-stable processes, \citet{Padoan.etal:2010} initially suggested using a pairwise (composite) likelihood, which is built by combining bivariate densities that are possibly weighted to improve statistical and computational efficiency. The benefits of this approach are that (i) it is simple to implement; (ii) it yields dramatic reductions in computational burden with respect to a full likelihood-based approach; and (iii) large-sample properties of composite likelihood estimators are well understood. The main drawback is that it leads to some considerable loss in efficiency due to using only the information contained in pairs of variables. Similarly, a pairwise M-estimator was proposed by \citet{Einmahl.etal:2016}, with optimal, data-driven weights to improve statistical efficiency. In the same spirit, \citet{Padoan.etal:2010} suggested selecting only close-by pairs of sites, i.e., using binary weights set according to the distance between sites, and choosing the cutoff distance in a way that minimizes the trace of the estimator's asymptotic variance. Although this improves the estimator, it is still quite far from optimal, especially in high-dimensional settings. Alternatively, \citet{Genton.etal:2011}, \citet{Huser.Davison:2013a}, \citet{Sang.Genton:2014} and \citet{Castruccio.etal:2016} have explored triplewise and higher-order composite likelihoods, and have shown that significant efficiency gains can be obtained by using \emph{truncated} composite likelihoods, i.e., by choosing the marginal likelihood components that are contained within a disk of fixed radius. However, this approach is still not very attractive in large dimensions $D$, because it is costly to enumerate all the ${D \choose d}$ marginal likelihood components that are built from $2\leq d\leq D$ sites, and to identify and evaluate those that are contained within a disk of radius $\delta>0$. Moreover, unless the truncation distance $\delta$ is very small, the number of such selected components may still be too large to be practical in high dimensions $D$.

In this paper, we propose making inference for max-stable processes by leveraging the Vecchia approximation \citep{Vecchia:1988}. Essentially, the joint density of the data is approximated by a product of well-chosen lower-dimensional conditional densities. Therefore, as explained in Section~\ref{sec:composite}, it can be viewed as a particular type of (weighted) composite likelihood. However, unlike the classical pairwise or higher-order composite likelihood approaches considered previously in the extreme-value literature, the Vecchia approximation provides by construction a \emph{valid} likelihood function, in the sense that it corresponds to the joint density of a well-defined data generating process that approximates the true process under study. Moreover, the number of conditional densities to compute is proportional to the dimension $D$. For these reasons, the Vecchia approximation has been found in the Gaussian-based geostatistical setting not only to provide fast inference for big datasets, but also to generally retain high efficiency compared to full likelihood approaches and to outperform block composite likelihoods \citep{Stein.etal:2004,Katzfuss.etal:2020,Katzfuss.Guinness:2021}. 

The Vecchia approximation relies on the choice of three elements: (i) a permutation defining an ordering of spatial sites; (ii) the number of conditioning sites; and (iii) the conditioning sets themselves. While this flexibility might be seen as a limitation, \citet{Guinness:2018} instead argues that it can be exploited to sharpen the approximation. Based on simulation results, \citet{Guinness:2018} suggested using a maximum-minimum distance ordering, which provides some improvements over coordinate-based orderings. In order to study the exact effect that these three choices have on the Vecchia approximation, and to do a formal comparison with classical composite likelihood approaches, we study in Section~\ref{sec:Gaussian} the theoretical asymptotic relative efficiency of these different estimators in the Gaussian setting for various correlation models. Our new results complement the theoretical results of \citet{Stein.etal:2004} and the numerical results of \citet{Guinness:2018}, \citet{Katzfuss.etal:2020}, and \citet{Katzfuss.Guinness:2021}. In the Supplementary Material, we also study the efficiency gains of an alternative composite likelihood approach that modifies the weights involved in the classical Vecchia approximation. In Section~\ref{sec:MaxStable}, we conduct an extensive simulation study to extend these results to the popular Brown--Resnick max-stable model and, in the Supplementary Material, to the multivariate logistic max-stable model. Our results for the Gaussian and max-stable cases provide evidence that the Vecchia approximation yields competitive efficiency and attractive computational savings, while scaling well with the dimension. We use our results to provide guidance on the choice of the ordering and conditioning sets in the max-stable setting.

In Section~\ref{sec:application}, we exploit the Vecchia approximation to study sea surface temperature extremes for the whole Red Sea at more than a thousand sites. We demonstrate the advantages of using the Vecchia approximation method compared to traditional composite likelihoods. Section~\ref{sec:conclusion} concludes with some discussion and a perspective on future research.

\section{Inference based on composite likelihoods and the Vecchia approximation}\label{sec:composite}

\subsection{Composite likelihoods and choice of weights}\label{subsec:composite}
Consider a $D$-dimensional random vector $\boldsymbol{Z}\in\Real^D$ with density $f(\boldsymbol{z};\boldsymbol{\psi})$, $\boldsymbol{z}\in\mathcal Z\subset\Real^D$, parametrized in terms of a $m$-dimensional vector $\boldsymbol{\psi}=(\psi_1,\ldots,\psi_m)^\top\in\Psi\subseteq\Real^m$. Marginal densities of all subvectors are also denoted by $f$, for simplicity. Suppose that the true parameter vector is $\boldsymbol{\psi}_0\in{\rm Int}(\Psi)$. Then, a composite log-likelihood for $n$ independent realizations $\boldsymbol{z}_1,\ldots,\boldsymbol{z}_n$ of the random vector $\boldsymbol{Z}$ may be defined as $\ell_{C}(\boldsymbol{\psi})=\sum_{i=1}^n \ell_{C}(\boldsymbol{\psi};\boldsymbol{z}_i)$, where
\begin{equation}
\label{eq:composite}
\ell_{C}(\boldsymbol{\psi};\boldsymbol{z})=\sum_{S \in C_{D}}w_{S} \log f(\boldsymbol{z}_S;\boldsymbol{\psi})=\sum_{d=1}^D\sum_{S \in C_{D;d}}w_{S} \log f(\boldsymbol{z}_S;\boldsymbol{\psi}),
\end{equation}
where $C_{D}=\cup_{d=1}^DC_{D;d}$ is the collection of all non-empty subsets of $\{1,\ldots,D\}$, $C_{D;d}$ is the collection of all $d$-dimensional subsets of $\{1,\ldots,D\}$, and $w_S\in\Real$ is a weight attributed to subset $S$. We write $\boldsymbol{z}_S$ to denote the subvectors obtained by restricting $\boldsymbol{z}$ to the components indexed by the subset $S$. The maximum composite likelihood estimator (MCLE) is defined as $\widehat{\boldsymbol{\psi}}_C=\arg\max_{\boldsymbol{\psi}\in\Psi}\ell_C(\boldsymbol{\psi})$. Provided all likelihood terms involved in \eqref{eq:composite} satisfy the Bartlett identities, the gradient of \eqref{eq:composite} with respect to $\boldsymbol{\psi}$ is an unbiased estimating equation, and thus the classical asymptotic theory can be applied. If $\boldsymbol{\psi}$ is identifiable from the likelihood terms with non-zero weight in \eqref{eq:composite}, then under mild regularity conditions, $\widehat{\boldsymbol{\psi}}_C$ is consistent and asymptotically normal as $n\to\infty$ and the variance-covariance matrix of $\widehat{\boldsymbol{\psi}}_C$ can be approximated by $\boldsymbol{V}=n^{-1}\boldsymbol{J}^{-1}(\boldsymbol{\psi}_0)\boldsymbol{K}(\boldsymbol{\psi}_0)\boldsymbol{J}^{-1}(\boldsymbol{\psi}_0)$ for large $n$, where $\boldsymbol{J}(\boldsymbol{\psi})=\E\{-{\partial^2\over\partial\boldsymbol{\psi}\partial\boldsymbol{\psi}} \ell_{C}(\boldsymbol{\psi};\boldsymbol{Z})\}$ is the sensitivity matrix and $\boldsymbol{K}(\boldsymbol{\psi})=\var\{{\partial\over\partial\boldsymbol{\psi}} \ell_{C}(\boldsymbol{\psi};\boldsymbol{Z})\}$ is the variability matrix; see, e.g., \citet{Varin.etal:2011}. The choice of weights $w_S$  in \eqref{eq:composite} turns out to be crucial for the estimator's efficiency. Although weights are often assumed to be non-negative \citep{Varin.etal:2011,Castruccio.etal:2016}, this is non-necessarily restrictive and \citet{Pace.etal:2019} show that optimal weights may in some cases be negative; see also \citet{Fraser.Reid:2019}. As the sum in \eqref{eq:composite} involves $2^D-1$ terms, some weights $w_S$ are usually set to zero for computations.

Composite marginal log-likelihoods of order $d=1,2,\ldots,D$ are defined by setting $w_S=0$ in \eqref{eq:composite} for all subsets $S\subset\{1,\ldots,D\}$ with cardinality $|S|\neq d$. They corresponding composite log-likelihood may be written as $\ell_{C;d}(\boldsymbol{\psi})=\sum_{i=1}^n \ell_{C;d}(\boldsymbol{\psi};\boldsymbol{z}_i)$ with
\begin{equation}\label{eq:composited}
\ell_{C;d}(\boldsymbol{\psi};\boldsymbol{z})=\sum_{S \in C_{D;d}}w_{S} \log f(\boldsymbol{z}_S;\boldsymbol{\psi}).
\end{equation}
We write $\widehat{\boldsymbol{\psi}}_{C;d}$ to denote the mode of $\ell_{C;d}(\boldsymbol{\psi})$. The definition \eqref{eq:composited} includes pairwise ($d=2$) or triplewise ($d=3$) likelihoods that were advocated by \citet{Padoan.etal:2010}, \citet{Genton.etal:2011} and \citet{Huser.Davison:2013a} as a method of inference for max-stable processes, for which the full likelihood is intractable in large dimensions $D$. \citet{Castruccio.etal:2016} also investigated higher-order composite likelihoods of the form \eqref{eq:composited} and reported efficiency gains for increasing $d$. The choice of weights $w_S$ in pairwise likelihoods is not trivial. In the context of max-stable processes, \citet{Padoan.etal:2010} suggest using binary weights $w_S=\mathbb I(\|\boldsymbol{h}_S\|\leq \delta)$, for some cutoff distance $\delta>0$, where $\|\boldsymbol{h}_S\|$ denotes the distance between the pair of sites indexed by the set $S$ and $\mathbb I(\cdot)$ is the indicator function, while they choose $\delta$ by minimizing an estimate of the asymptotic variance $\boldsymbol{V}$. This approach leads to efficiency gains as opposed to using equal weights, i.e., $w_S=1$ for all $S$, but it may not be optimal. \citet{Huser:2013}, Chapter~3, studies the efficiency of pairwise likelihood estimators for Gaussian and max-stable time series models, and provide some further guidance on the choice of weights. For higher-order composite likelihoods with $d>2$, it is even less clear how to select the weights $w_S$ optimally, and by analogy to the pairwise likelihood setting, \citet{Sang.Genton:2014} and \citet{Castruccio.etal:2016} have suggested adopting a \emph{truncated} composite likelihood approach, which uses weights of the type $w_S=\mathbb I(\max_{\{i,j\}\subset S}\|\boldsymbol{h}_{\{i,j\}}\|\leq \delta)$ for some cutoff distance $\delta>0$, thus discarding $d$-dimensional subsets with pairs of sites that are distant from each other.

\subsection{Vecchia approximation}\label{subsec:Vecchia}
The Vecchia approximation \citep{Vecchia:1988} relies on the simple fact that the joint density can be written as the product of conditional densities; see also \citet{Stein.etal:2004}. Consider the vector  $\boldsymbol{z}=(z_1,\ldots,z_D)^\top\in\Real^D$ and a permutation $p:\{1,\ldots,D\}\mapsto\{1,\ldots,D\}$, which defines a re-ordering of the variables $z_j$, $j=1,\ldots,D$. We define the ``history'' of the $j$th variable based on the permutation $p$ as the subvector $\boldsymbol{z}_{H(j;p)}$, where $H(j;p)=\{l\in\{1,\ldots,D\}:p(l)<p(j)\}$ denotes the index set of ``past'' variables. Then, for any choice of permutation $p$, the joint density may be expressed as
\begin{equation}\label{eq:fulldensity}
f(\boldsymbol{z};\boldsymbol{\psi})=f(z_{p(1)};\boldsymbol{\psi})\prod_{j=2}^Df(z_{p(j)}\mid\boldsymbol{z}_{H(j;p)};\boldsymbol{\psi}).
\end{equation}
The Vecchia approximation consists in replacing the history $\boldsymbol{z}_{H(j;p)}$ in \eqref{eq:fulldensity} with a subvector $\boldsymbol{z}_{S(j;p)}$, with $S(j;p)\subseteq H(j;p)$, i.e., 
\begin{equation}\label{eq:densityVecchia}
f_V(\boldsymbol{z};\boldsymbol{\psi}):= f(z_{p(1)};\boldsymbol{\psi})\prod_{j=2}^Df(z_{p(j)}\mid\boldsymbol{z}_{S(j;p)};\boldsymbol{\psi})\approx f(\boldsymbol{z};\boldsymbol{\psi}).
\end{equation}
A counterpart of \eqref{eq:densityVecchia} based on blocks of variables is also considered in \citet{Stein.etal:2004}. While the permutation $p$ is irrelevant for the full density in \eqref{eq:fulldensity}, it affects the approximation \eqref{eq:densityVecchia}. As opposed to time series data, there is no natural ordering of variables in the spatial setting, and although \citet{Stein.etal:2004} argues that it has a negligible impact on the quality of the Vecchia approximation, \citet{Guinness:2018} instead suggests that certain orderings have a better performance than simple coordinate-based orderings. As \citet{Stein.etal:2004} and \citet{Katzfuss.Guinness:2021} show, the Vecchia approximation crucially depends on the size of the conditioning sets $S(j;p)$, which implies is a tradeoff between approximation accuracy and computational efficiency. Usually, a compromise is adopted between singletons of cardinality $|S(j;p)|=1$ (with low computational burden but poor approximation) and maximal sets of cardinality $|S(j;p)|=j-1$ as with the full likelihood (with perfect approximation but heavy computational burden). Here, we choose to restrict the cardinality to $|S(j;p)|=\min(j,d)-1$, for some lower dimension $2\leq d\leq D$. Typically, the ``cutoff dimension'' $d$ will be quite small, which dramatically reduces the computational burden. Finally, the Vecchia approximation \eqref{eq:densityVecchia} also depends on the specific choice of variables to include in the sub-history $S(j;p)$. We here follow the original paper of \citet{Vecchia:1988} who in the spatial context suggest including the $\min(j,d)-1$ nearest neighbors of the $j$-th site among those that belong to its history, $H(j;p)$. Thereafter, we write $S(j;p)\equiv S_{d-1}(j;p)$ to stress that the dimensionality of the conditioning sets is at most $d-1$.

The log-likelihood based on the Vecchia approximation \eqref{eq:densityVecchia} may be written in composite likelihood form as in \eqref{eq:composite}. Precisely, it may be expressed as 
\begin{align}\label{eq:compositeVecchia}
\ell_{V;d}(\boldsymbol{\psi};\boldsymbol{z})&=\log f_{V;d}(\boldsymbol{z};\boldsymbol{\psi}) \nonumber \\
&=\log f(z_{p(1)};\boldsymbol{\psi}) + \sum_{j=2}^D \log f(z_{p(j)},\boldsymbol{z}_{S_{d-1}(j;p)};\boldsymbol{\psi}) - \sum_{j=2}^D\log f(\boldsymbol{z}_{S_{d-1}(j;p)};\boldsymbol{\psi}),
\end{align}
where $D$ composite likelihood weights $w_S$ in \eqref{eq:composite} are set to $1$, $D-1$ weights are set to $-1$, and the rest are set to zero. There are thus only $2D-1$ likelihood terms to evaluate in \eqref{eq:compositeVecchia}, as opposed to $\sum_{d=1}^D{D\choose d}=2^D-1$ terms in \eqref{eq:composite} and ${D\choose d}$ terms in \eqref{eq:composited}. The dimension of densities involved in \eqref{eq:compositeVecchia} is at most $d$, and thus, is in some sense comparable to \eqref{eq:composited}. We write $\widehat{\boldsymbol{\psi}}_{V;d}$ to denote the mode of $\ell_{V;d}(\boldsymbol{\psi})=\sum_{i=1}^n \ell_{V;d}(\boldsymbol{\psi};\boldsymbol{z}_i)$, with $\ell_{V;d}(\boldsymbol{\psi};\boldsymbol{z})$ defined in \eqref{eq:compositeVecchia}, and because of the analogy between \eqref{eq:compositeVecchia} and \eqref{eq:composite}, the same asymptotic theory applies, although $\widehat{\boldsymbol{\psi}}_{V;d}$ usually provides gains in efficiency as compared to $\widehat{\boldsymbol{\psi}}_{C;d}$; see Sections~\ref{sec:Gaussian} and \ref{sec:MaxStable}.

Notice that because the Vecchia approximation relies on a \emph{nested} sequence of conditional events, the expression \eqref{eq:densityVecchia} is by construction a valid likelihood function that corresponds to a specific data generating process \citep{Katzfuss.Guinness:2021}, as opposed to pairwise likelihoods or more general composite likelihoods as in \eqref{eq:composite}. As such, it avoids using ``redundant'' information, which is key to improving the estimator's efficiency. As illustrated in Figure~\ref{fig:VecchiaApproximationProcess}, the Vecchia likelihood approximation actually yields an approximation of the process itself. 
\begin{figure}[t!]
\centering
\includegraphics[width=\linewidth]{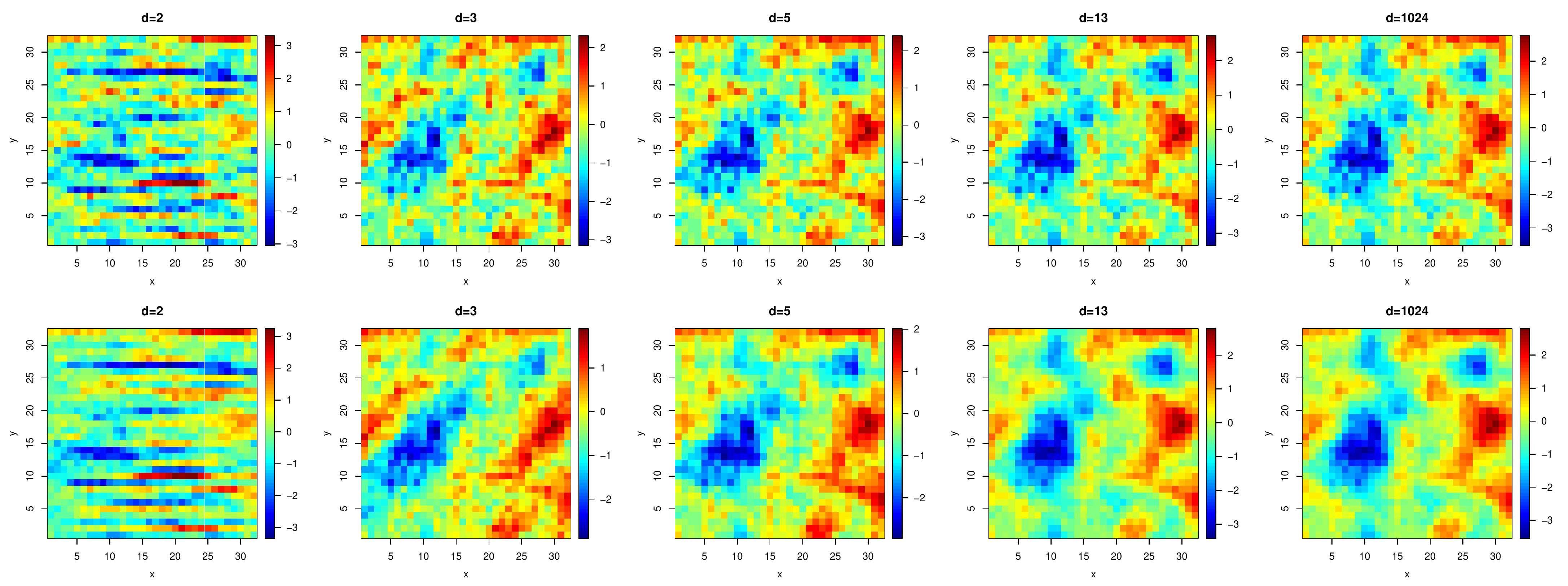}
\caption{Realizations from Gaussian processes on $\{1,\ldots,32\}^2$ with zero mean, unit variance, and correlation function $\corr\{Z(\boldsymbol{s}),Z(\boldsymbol{s}+\boldsymbol{h})\}=\exp(-\|\boldsymbol{h}\|/5)$ (top right) and $\corr\{Z(\boldsymbol{s}),Z(\boldsymbol{s}+\boldsymbol{h})\}=\exp\{-(\|\boldsymbol{h}\|/5)^{1.5}\}$ (bottom right), and their corresponding Vecchia approximations for $d=2,3,5,13$ (from left to right), using a coordinate-based ordering.}\label{fig:VecchiaApproximationProcess}
\end{figure}
The larger the cutoff dimension $d$, the better the approximation, as expected. For small cutoff dimensions $d$, the approximation fails at accurately representing the full joint distribution, although it captures the low-dimensional interactions reasonably well. The choice of a coordinate-based ordering for the Vecchia approximation is apparent for $d=2$, but the approximation improves dramatically as $d$ increases. In fact, since the Vecchia likelihood approximation is a valid likelihood function (thus, a density), it is possible to measure the quality the approximation by considering the Kullback-Leibler (KL) divergence of $f_{V;d}(\boldsymbol{z})$ with respect to the true likelihood $f(\boldsymbol{z})$ (with dependence on $\boldsymbol{\psi}$ suppressed for readability), i.e., ${\rm KL}(f\| f_{V;d})=\int f(\boldsymbol{z})\log\{f(\boldsymbol{z})/f_{V;d}(\boldsymbol{z})\}{\rm d}\boldsymbol{z}$; see, e.g., \citet{Schafer.etal:2021} for some approximation results in the Gaussian case. When subsets $S_{d-1}(j;p)$ are chosen as the nearest neighbors from the $j$-th site, we can show that, in the general case, ${\rm KL}(f\| f_{V;d})$ is always a non-increasing function of $d$, i.e., the approximate Vecchia likelihood gets ``closer and closer'' to the true likelihood, as expected. This result is formalized in Proposition~\ref{prop:KL}. Notice that this usually not does hold for general (renormalized) composite likelihoods. 
\begin{prop}\label{prop:KL}
Consider the true likelihood $f(\boldsymbol{z})$ in \eqref{eq:fulldensity}, $\boldsymbol{z}\in\mathcal Z\subset\Real^D$, and the Vecchia likelihood approximation $f_{V;d}(\boldsymbol{z})$ in \eqref{eq:densityVecchia}--\eqref{eq:compositeVecchia}, constructed from subsets $S_{d-1}(j;p)\subset H(j;p)$ based on some permutation $p$ and comprising the $\min(j,d)-1$ nearest neighbors of the $j$-th location (from its history $H(j;p)$). Then, the function $d\mapsto {\rm KL}(f\| f_{V;d})$ is monotone non-increasing in the cutoff dimension $d$. Moreover, when $d=D$, one has ${\rm KL}(f\| f_{V;d})=0$.
\end{prop}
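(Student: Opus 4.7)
The plan is to exploit the conditional-density structure of both $f$ and $f_{V;d}$ and to rewrite the KL divergence as a sum of conditional mutual informations, one per term in the Vecchia product, so that monotonicity in $d$ follows from the chain rule.

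First, I would start from the identity
\begin{equation*}
{\rm KL}(f\| f_{V;d}) = \E\{\log f(\boldsymbol{Z})\} - \E\{\log f_{V;d}(\boldsymbol{Z})\},
\end{equation*}
where the expectations are under the true $f$. Using \eqref{eq:fulldensity} for $f$ and \eqref{eq:densityVecchia} for $f_{V;d}$, both decompositions share the same initial term $\log f(z_{p(1)})$, and each $j\geq 2$ contributes $\log f(z_{p(j)}\mid \boldsymbol{z}_{H(j;p)})$ in $f$ and $\log f(z_{p(j)}\mid \boldsymbol{z}_{S_{d-1}(j;p)})$ in $f_{V;d}$. Subtracting and taking expectations, I would identify each resulting term as a conditional mutual information,
\begin{equation*}
{\rm KL}(f\| f_{V;d}) = \sum_{j=2}^D I\bigl(Z_{p(j)};\, \boldsymbol{Z}_{H(j;p)\setminus S_{d-1}(j;p)} \bigm| \boldsymbol{Z}_{S_{d-1}(j;p)}\bigr),
\end{equation*}
which follows because $S_{d-1}(j;p)\subseteq H(j;p)$ and by the definition $I(X;Y\mid Z)=\E\{\log[f(X\mid Y,Z)/f(X\mid Z)]\}$.

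Next, I would invoke the key structural property of the nearest-neighbor rule: if $d\leq d'$, then the set of $\min(j,d)-1$ nearest neighbors of the $j$th site from $H(j;p)$ is contained in the set of $\min(j,d')-1$ nearest neighbors, i.e., $S_{d-1}(j;p)\subseteq S_{d'-1}(j;p)\subseteq H(j;p)$. Writing $A=S_{d-1}(j;p)$, $B=S_{d'-1}(j;p)\setminus S_{d-1}(j;p)$, and $C=H(j;p)\setminus S_{d'-1}(j;p)$, the chain rule for mutual information gives
\begin{equation*}
I(Z_{p(j)};\boldsymbol{Z}_{B\cup C}\mid \boldsymbol{Z}_A) = I(Z_{p(j)};\boldsymbol{Z}_B\mid \boldsymbol{Z}_A) + I(Z_{p(j)};\boldsymbol{Z}_C\mid \boldsymbol{Z}_{A\cup B}) \geq I(Z_{p(j)};\boldsymbol{Z}_C\mid \boldsymbol{Z}_{A\cup B}),
\end{equation*}
since conditional mutual information is non-negative. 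This shows that each summand is non-increasing in $d$, hence so is the total KL divergence. Finally, for $d=D$, $|S_{D-1}(j;p)|=j-1=|H(j;p)|$ and the inclusion $S_{D-1}(j;p)\subseteq H(j;p)$ forces equality, so every conditional mutual information vanishes and ${\rm KL}(f\|f_{V;D})=0$.

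The only subtle point is the monotone inclusion $S_{d-1}(j;p)\subseteq S_{d'-1}(j;p)$, which is immediate from the nearest-neighbor construction (with any deterministic tie-breaking rule fixed throughout). The rest is a clean application of the chain rule and non-negativity of (conditional) mutual information; no regularity beyond the existence of the conditional densities in \eqref{eq:fulldensity} is required, and no distributional assumption on $\boldsymbol{Z}$ is needed, which matches the generality claimed in the statement.
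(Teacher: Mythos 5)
Your proof is correct and is essentially the paper's argument in different notation: the paper telescopes the cross-entropy $-\int f\log f_{V;d}$ into a sum of conditional entropies and uses the nestedness $S_{d-2}(j;p)\subset S_{d-1}(j;p)$ together with ``conditioning reduces entropy,'' which is exactly your decomposition of ${\rm KL}(f\|f_{V;d})$ into conditional mutual informations plus the chain rule and non-negativity of $I(\cdot;\cdot\mid\cdot)$. Your version has the minor expository advantage of exhibiting the KL divergence directly as a sum of non-negative terms (and of making the tie-breaking caveat explicit), but the key ideas coincide.
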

\begin{proof}
By definition, one has 
\begin{align*}
{\rm KL}(f\| f_{V;d})&=\int f(\boldsymbol{z})\log\{f(\boldsymbol{z})/f_{V;d}(\boldsymbol{z})\}{\rm d}\boldsymbol{z}=h[f_{V;d}]-h[f],
\end{align*}
where $h[f]=-\int f(\boldsymbol{z})\log f(\boldsymbol{z}){\rm d}\boldsymbol{z}$ is the entropy of the density $f$, and similarly for $h[f_{V;d}]$. Since $h[f]$ is constant in the cutoff dimension $d$, it is sufficient to show that $h[f_{V;d}]\leq h[f_{V;d-1}]$ for all $d=3,\ldots,D$. By definition of the Vecchia approximation in \eqref{eq:densityVecchia}--\eqref{eq:compositeVecchia}, we can write
\begin{align*}
h[f_{V;d}]&=-\int f(\boldsymbol{z})\log f_{V;d}(\boldsymbol{z}){\rm d}\boldsymbol{z}\\
&=-\int f(\boldsymbol{z})\log\left\{f(z_{p(1)})\right\}{\rm d}\boldsymbol{z}-\sum_{j=2}^D\int f(\boldsymbol{z})\log\left\{f(z_{p(j)}\mid\boldsymbol{z}_{S_{d-1}(j;p)})\right\}{\rm d}\boldsymbol{z}\\
&=h[f_{Z_{p(1)}}]+\sum_{j=2}^Dh[f_{Z_{p(j)}\mid\boldsymbol{Z}_{S_{d-1}(j;p)}}],
\end{align*}
where $f_{Z_{p(1)}}$ denotes the density of the random variable $Z_{p(1)}\sim f(z_{p(1)})$, and $f_{Z_{p(j)}\mid\boldsymbol{Z}_{S_{d-1}(j;p)}}$ denotes the conditional density $f(z_{p(j)}\mid\boldsymbol{z}_{S_{d-1}(j;p)})$ of the random variable $Z_{p(j)}$ given $\boldsymbol{Z}_{S_{d-1}(j;p)}=\boldsymbol{z}_{S_{d-1}(j;p)}$. Now, because the subsets $S_{d-1}(j;p)$ are composed of nearest neighbors of the $j$-th variable, they are nested, i.e.,
$$S_{1}(j;p)\subset S_{2}(j;p)\subset\cdots\subset S_{D-1}(j;p)=H(j;p).$$
This implies that for each cutoff dimension $d=3,\ldots,D$, the conditioning variables $\boldsymbol{Z}_{S_{d-1}(j;p)}$ are the same as $\boldsymbol{Z}_{S_{d-2}(j;p)}$ but augmented with one additional variable. Since the conditional entropy $h[f_{X\mid Y}]$ is always smaller than or equal to the marginal entropy $h[f_{X}]$ for all random vectors $(X,Y)^\top\sim f_{X,Y}(x,y)$ (with equality if $X$ and $Y$ are independent), it follows that $h[f_{Z_{p(j)}\mid\boldsymbol{Z}_{S_{d-1}(j;p)}}]\leq h[f_{Z_{p(j)}\mid\boldsymbol{Z}_{S_{d-2}(j;p)}}]$, and thus $h[f_{V;d}]\leq h[f_{V;d-1}]$, for all $d=3,\ldots,D$. This proves that ${\rm KL}(f\| f_{V;d})$ is monotone non-increasing in $d$. Moreover, since $f_{V;D}=f$, we have that ${\rm KL}(f\| f_{V;D})=0$ by definition, which concludes the proof.
\end{proof}
The illustration in Figure~\ref{fig:VecchiaApproximationProcess} and the result in Proposition~\ref{prop:KL} both imply that the approximation improves as $d$ increases. This suggests that a similar improvement is to be expected in terms of the relative efficiency of the corresponding Vecchia likelihood estimator, $\widehat{\boldsymbol{\psi}}_{V;d}$.

Although the Vecchia log-likelihood in \eqref{eq:compositeVecchia} is appealing and has good efficiency, there is no reason why the corresponding weights $w_S\in\{-1,0,1\}$ should necessarily be optimal. Therefore, we also explore here a modified Vecchia likelihood obtained by changing the weights attributed to the conditioning sets, i.e., 
\begin{equation}\label{eq:compositemodifiedVecchia}
\ell_{V;d;\omega}(\boldsymbol{\psi};\boldsymbol{z})=\log f(\boldsymbol{z}_{p(1)};\boldsymbol{\psi}) + \sum_{j=2}^D \log f(z_{p(j)},\boldsymbol{z}_{S_{d-1}(j;p)};\boldsymbol{\psi}) + \omega\sum_{j=2}^D\log f(\boldsymbol{z}_{S_{d-1}(j;p)};\boldsymbol{\psi}),
\end{equation}
where $\omega\in[-1,\infty)$ is a weight to be selected. When $\omega=-1$, \eqref{eq:compositemodifiedVecchia} reduces to the Vecchia likelihood in \eqref{eq:compositeVecchia}, and when $\omega=0$, \eqref{eq:compositemodifiedVecchia} almost corresponds to a composite likelihood of order $d$ in \eqref{eq:composited}, with weights appropriately chosen. As $\omega\to\infty$, the contribution of the conditioning set dominates, and \eqref{eq:compositemodifiedVecchia} therefore roughly corresponds to a composite likelihood estimator of order $d-1$ with weights appropriately chosen. We write $\widehat{\boldsymbol{\psi}}_{V;d;\omega}$ to denote the mode of $\ell_{V;d;\omega}(\boldsymbol{\psi})=\sum_{i=1}^n \ell_{V;d;\omega}(\boldsymbol{\psi};\boldsymbol{z}_i)$, with $\ell_{V;d;\omega}(\boldsymbol{\psi};\boldsymbol{z})$ defined in \eqref{eq:compositemodifiedVecchia}. Higher efficiency can be obtained by fine-tuning the weight $\omega$. In the Supplementary Material, we do an in-depth investigation of the optimal choice of $\omega$ in the Gaussian setting, and we find that in general the classical Vecchia estimator with $\omega=-1$ is quite competitive in terms of its efficiency compared to the optimal case. In the sequel, we shall therefore set $\omega=-1$.

\section{Asymptotic relative efficiency in the Gaussian case}\label{sec:Gaussian}
\subsection{Setting}\label{subsec:setting}
In order to have a better theoretical understanding of the relative efficiencies of the different estimators introduced in Section~\ref{sec:composite}, we start by considering the Gaussian setting, which also provides qualitative insights into the behavior of these estimators in more complex settings. The max-stable case is studied in more detail by simulation in Section~\ref{sec:MaxStable}. 
Here, we consider a stationary Gaussian process $Z(\boldsymbol{s})$, $\boldsymbol{s}\in\Real^2$, with zero mean and unit variance, and we assume that data $\boldsymbol{Z}=(Z(\boldsymbol{s}_1),\ldots,Z(\boldsymbol{s}_D))^\top$ are located on the grid $\{1,\ldots,\sqrt{D}\}^2$ with $D=100$. To be concise, we here only consider an exponential spatial correlation model, while in the Supplementary Material we also investigate asymptotic relative efficiencies in a non-spatial, fully exchangeable model, as well as a powered exponential spatial correlation model.

We compare the (theoretical) asymptotic relative efficiency of the following estimators:
\begin{enumerate}
\item The maximum full likelihood estimator, denoted $\widehat{\boldsymbol{\psi}}$.
\item The composite likelihood estimator of order $d$, $\widehat{\boldsymbol{\psi}}_{C;d}$, defined in \eqref{eq:composited}. We consider the dimensions $d=2,3,4,5$ and adopt a truncation strategy as in \citet{Castruccio.etal:2016} to reduce the computational burden by setting the weights as $w_S=\mathbb I(\max_{\{i,j\}\subset S}\|\boldsymbol{h}_{\{i,j\}}\|\leq \delta)$ with cutoff distance $\delta=1,\sqrt{2},2,\sqrt{5},\sqrt{8}$ (i.e., selecting only the $1$st--$5$th-order neighbors, respectively). The number of selected likelihood terms in each case is reported in Table~\ref{tab:likelihoodterms}. For fixed $d$, this is roughly proportional to the time to compute $\widehat{\boldsymbol{\psi}}_{C;d}$.
\begin{table}[t!]
\centering
\caption{Number of likelihood terms involved in \eqref{eq:composited}, with $d=2,3,4,5$ and weights $w_S=\mathbb I(\max_{\{i,j\}\subset S}\|\boldsymbol{h}_{\{i,j\}}\|\leq \delta)$ with cutoff distance $\delta=1,\sqrt{2},2,\sqrt{5},\sqrt{8}$. The numbers below are for data sampled on the grid $\{1,\ldots,\sqrt{D}\}^2$ with $D=100$. Numbers in brackets are the proportions among the ${D\choose d}$ possible terms. The estimator $\widehat{\boldsymbol{\psi}}_{C;d}$ cannot be computed when the number of terms is zero. For comparison, the number of likelihood terms involved in the Vecchia likelihood \eqref{eq:compositeVecchia} is always $2D-1=199$.}\label{tab:likelihoodterms}
\vspace{5pt}
\begin{tabular}{c|rl|rl|rl|rl|rl|}
$d\setminus\delta$ & \multicolumn{2}{c|}{$1$} & \multicolumn{2}{c|}{$\sqrt{2}$} & \multicolumn{2}{c|}{$2$} & \multicolumn{2}{c|}{$\sqrt{5}$} & \multicolumn{2}{c|}{$\sqrt{8}$} \\ \hline
$2$ & $180$ & $(3.64\%)$ & $342$ & $(6.91\%)$ & $502$ & $(10.14\%)$ & $790$ & $(15.96\%)$ & $918$ & $(18.55\%)$  \\
$3$ & $0$ & $(0\%)$ & $324$ & $(0.20\%)$ & $772$ & $(0.48\%)$ & $2436$ & $(1.51\%)$ & $3332$ & $(2.06\%)$ \\
$4$ & $0$ & $(0\%)$ & $81$ & $(10^{-3}\%)$ & $433$ & $(0.01\%)$ & $3809$ & $(0.10\%)$ & $6433$ & $(0.16\%)$ \\
$5$ & $0$ & $(0\%)$ & $0$ & $(0\%)$ & $64$ & $(10^{-4}\%)$ & $3232$ & $(10^{-3}\%)$ & $7392$ & $(0.01\%)$ 
\end{tabular}
\end{table}
\item The Vecchia likelihood estimator, $\widehat{\boldsymbol{\psi}}_{V;d}$, defined in \eqref{eq:compositeVecchia}. 
We consider $d=2,3,4,5,9,13,21$ 
and select the $d-1$ nearest neighbors in the ``past'' variables. We compare the four different orderings of variables considered by \citet{Guinness:2018}: the coordinate-based ordering ($p_1$), a random ordering ($p_2$), the middle-out ordering ($p_3$), and the maximum-minimum ordering ($p_4$). The middle-out ordering starts with the variable at the center of the grid (which minimizes the average distance to all other points), and then selects the order of variables according to their distance to the center point. The maximum-minimum ordering also starts from the center variable, but then selects the next variables in a way that maximizes the minimum distance to all previously selected points. If there are multiple points that maximize the minimum distance, 
we select the next variable randomly among the possible solutions. The different orderings are illustrated in Figure~\ref{fig:orderings}.
\end{enumerate}
\begin{figure}[t!]
\centering
\includegraphics[width=\linewidth]{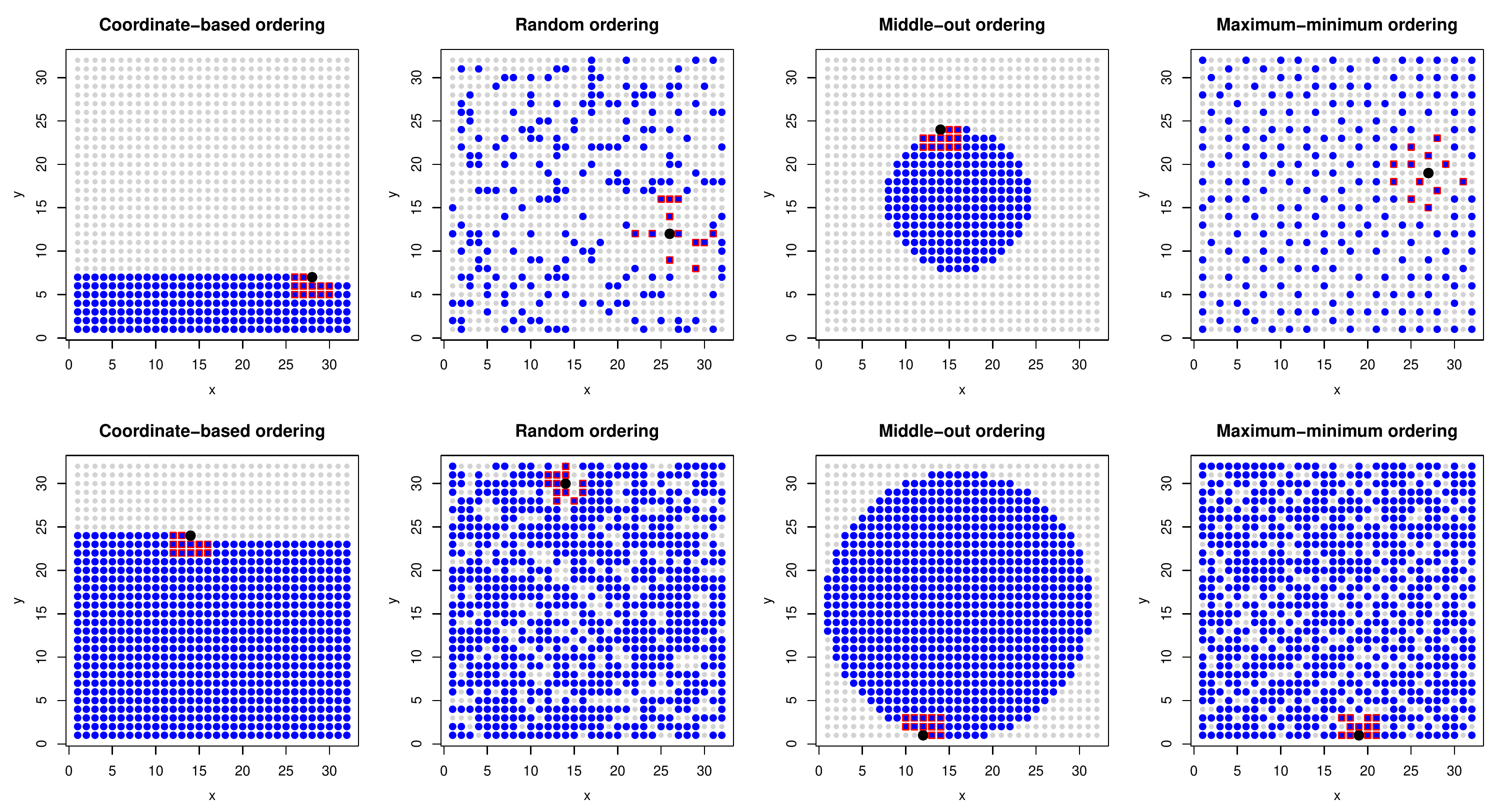}
\caption{Illustration of the four different orderings considered for the Vecchia approximation (reproduced from \citealp{Guinness:2018}): coordinate-based (left), random (2nd column), middle-out (3rd column) and maximum-minimum (right), when data are assumed to be sampled on the grid $\{1,\ldots,32\}^2$ (grey dots). The black dots represent the 220th (top) and 750th (bottom) points for each ordering. The blue dots represent the ``past'' variables, and the red squares are the $12$ nearest neighbors among the ``past'' variables.}\label{fig:orderings}
\end{figure}

The asymptotic relative efficiency of an estimator $\widehat{\boldsymbol{\psi}}_A$ (either $\widehat{\boldsymbol{\psi}}_{C;d}$, $\widehat{\boldsymbol{\psi}}_{V;d}$, or $\widehat{\boldsymbol{\psi}}_{V;d;\omega}$) with respect to the maximum full likelihood estimator $\widehat{\boldsymbol{\psi}}$ is defined as follows. Let $\boldsymbol{V}_A$ and $\boldsymbol{V}$ be the corresponding asymptotic variance matrices. The exact formula for the asymptotic variance matrices are provided in Appendix~\ref{subsec:general}. For the $r$th parameter, we then define the marginal relative efficiency as the ratio of asymptotic standard deviations, i.e., ${\rm ARE}(\widehat{\boldsymbol{\psi}}_{A;r})=(\boldsymbol{V}_{r,r}/\boldsymbol{V}_{A;r,r})^{1/2}$. The overall relative efficiency is defined as ${\rm ARE}(\widehat{\boldsymbol{\psi}}_A)=(|\boldsymbol{V}|/|\boldsymbol{V}_{A}|)^{1/(2q)}$. When $\boldsymbol{\psi}$ is a scalar (i.e., $m=1$), the two definitions coincide.

\subsection{Results based on the exponential correlation function}\label{subsec:exponential}
We here study a spatial model with exponential correlation function $\corr\{Z(\boldsymbol{s}),Z(\boldsymbol{s}+\boldsymbol{h})\}=\exp(-\|\boldsymbol{h}\|/\lambda)$, where $\boldsymbol{h}$ is the spatial lag vector, $\|\boldsymbol{h}\|$ is its length, and $\boldsymbol{\psi}\equiv\lambda>0$ is the range parameter. The larger $\lambda$, the stronger the spatial correlation. 

\begin{figure}[t!]
\centering
\includegraphics[width=0.49\linewidth]{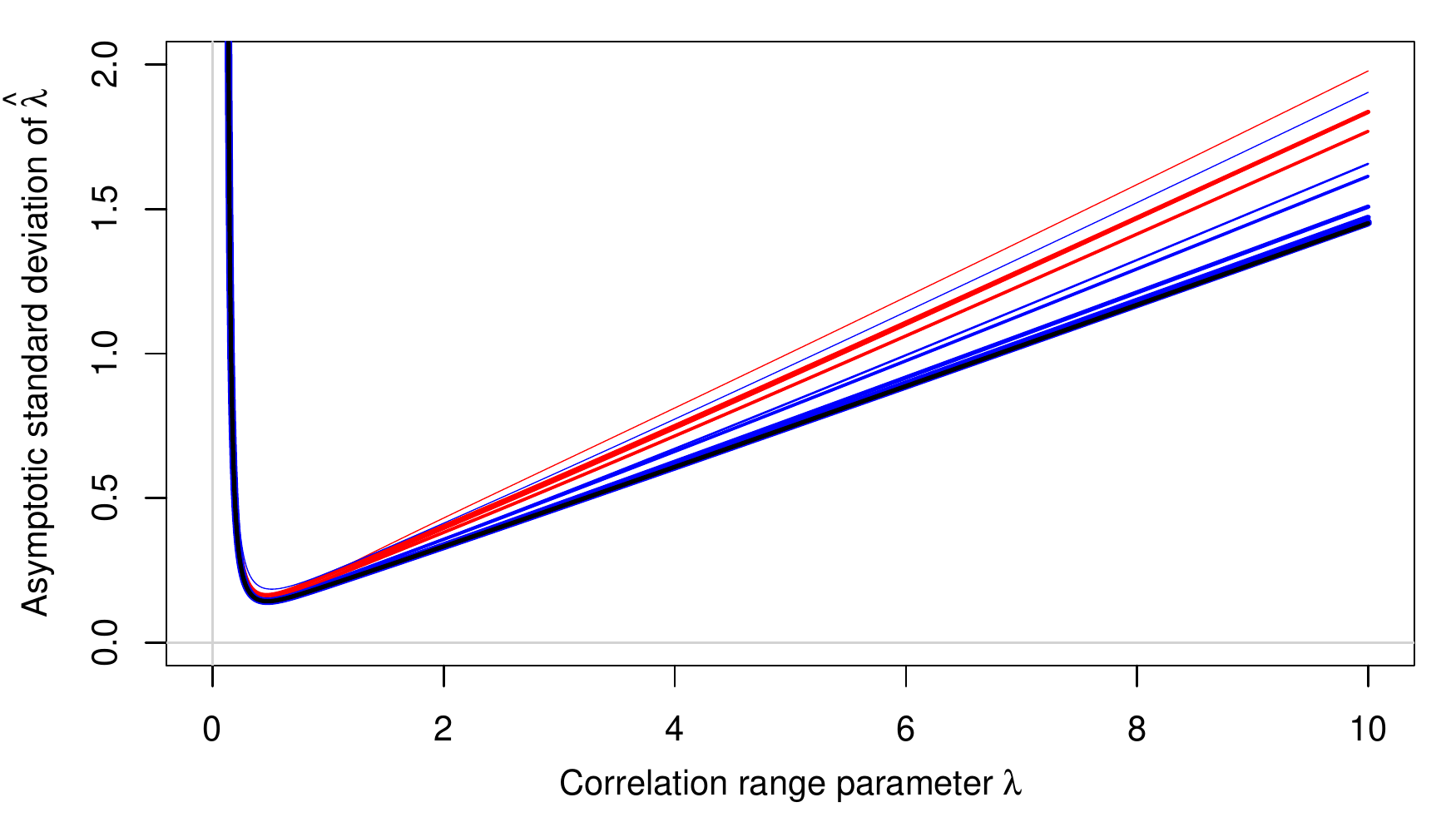}
\includegraphics[width=0.49\linewidth]{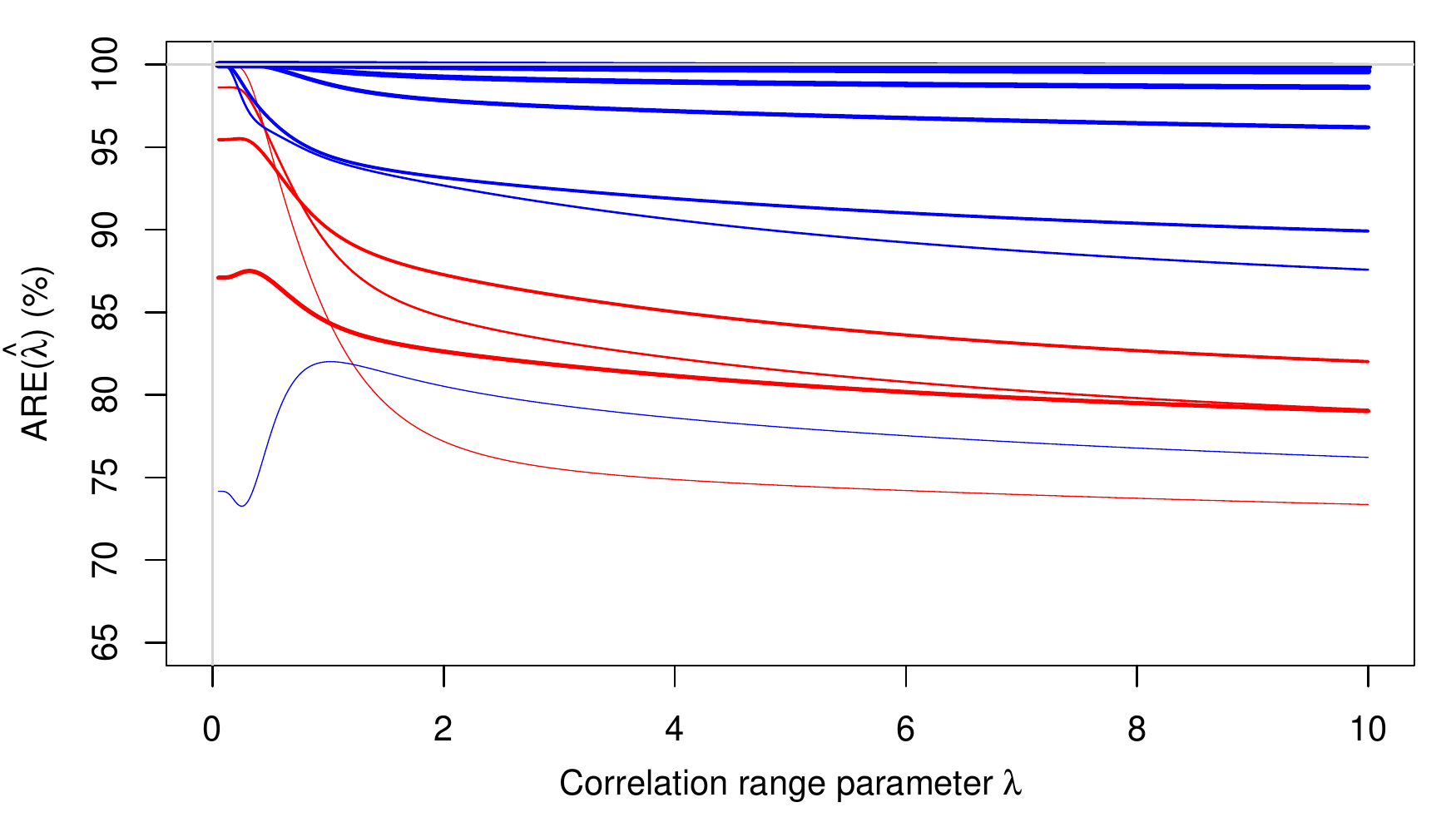}
\caption{Asymptotic standard deviation (left) and asymptotic relative efficiency (right) of the full likelihood estimator $\widehat{\lambda}$ (in black), the composite likelihood estimator $\widehat{\lambda}_{C;d}$ (in red) with $d=2,3,4,5$ (thin to thick curves) and cutoff distance $\delta=2$ (keeping about $10\%$ of pairs), and the Vecchia likelihood estimator $\widehat{\lambda}_{V;d}$ (in blue) with $d=2,3,4,5,9,13,21$ (thin to thick curves) and based on a coordinate ordering. We consider here the exponential model $\corr\{Z(\boldsymbol{s}),Z(\boldsymbol{s}+\boldsymbol{h})\}=\exp(-\|\boldsymbol{h}\|/\lambda)$, with true value $\lambda\in(0,10)$.}\label{fig:EfficiencyExponential}
\end{figure}

Figure~\ref{fig:EfficiencyExponential} displays the asymptotic standard deviation and asymptotic relative efficiency of the composite likelihood estimator $\widehat{\lambda}_{C;d}$ with cutoff distance $\delta=2$ (keeping about $10\%$ of pairs) and the Vecchia likelihood estimator $\widehat{\lambda}_{V;d}$ using a coordinate-based ordering, as a function of the range parameter $\lambda$, for various choices of $d$ (the dimension of likelihood terms). The Vecchia estimator $\widehat{\lambda}_{V;d}$ largely outperforms $\widehat{\lambda}_{C;d}$ for most values of $d$ and $\lambda$. Almost perfect efficiency is attained by the Vecchia likelihood estimator $\widehat{\lambda}_{V;d}$ for $d\geq5$.

\begin{table}[t!]
\centering
\caption{Asymptotic relative efficiency ($\%$) of the composite likelihood estimator $\widehat{\lambda}_{C;d}$ (left) with $d=2,3,4,5$ and cutoff distance $\delta=1,\sqrt{2},2,\sqrt{5},\sqrt{8}$, and of the Vecchia likelihood estimator $\widehat{\rho}_{V;d}$ (right) with $d=2,3,4,5,9,13,21$ and coordinate-based ($p_1$), random ($p_2$, middle out ($p_3$) and maximum-minimum ($p_4$) orderings. We consider here the exponential model $\corr\{Z(\boldsymbol{s}),Z(\boldsymbol{s}+\boldsymbol{h})\}=\exp(-\|\boldsymbol{h}\|/\lambda)$, with true value $\lambda=5$.}\label{tab:CompositeVecchiaExponential}
\vspace{5pt}
\begin{tabular}{r|>{\centering}p{1.3cm}>{\centering}p{1.3cm}>{\centering}p{1.3cm}>{\centering}p{1.3cm}>{\centering}p{1.3cm}|>{\centering}p{1.3cm}>{\centering}p{1.3cm}>{\centering}p{1.3cm}>{\centering\arraybackslash}p{1.3cm}|}
& \multicolumn{5}{c|}{Composite estimator $\widehat{\rho}_{C;d}$} & \multicolumn{4}{c|}{Vecchia estimator $\widehat{\rho}_{V;d}$}\\
& \multicolumn{5}{c|}{Cutoff distance $\delta$} & \multicolumn{4}{c|}{Ordering}\\
$d$ & $1$ & $\sqrt{2}$ & $2$ & $\sqrt{5}$ & $\sqrt{8}$ & $p_1$ & $p_2$ & $p_3$ & $p_4$ \\ \hline
$2$ & $90.4$ & $82.6$ & $74.5$ & $64.8$ & $60.6$ & $78.0$ & $67.7$ & $75.3$ & $58.7$ \\
$3$ & --- & $86.8$ & $81.4$ & $72.3$ & $69.3$ & $89.9$ & $83.4$ & $90.2$ & $84.5$ \\
$4$ & --- & $90.5$ & $84.3$ & $78.3$ & $75.8$ & $91.4$ & $93.1$ & $92.6$ & $92.4$ \\
$5$ & --- & --- & $80.6$ & $82.4$ & $80.4$ & $97.0$ & $96.8$ & $97.1$ & $97.7$ \\
$9$ &  &  &  &  &  & $98.9$ & $99.0$ & $99.5$ & $99.4$ \\
$13$ &  &  &  &  &  & $99.7$ & $99.7$ & $99.9$ & $99.8$ \\
$21$ &  &  &  &  &  & $99.9$ & $100.0$ & $100.0$ & $100.0$ 
\end{tabular}
\end{table}

Table~\ref{tab:CompositeVecchiaExponential} reports the asymptotic relative efficiency of the composite likelihood estimator $\widehat{\lambda}_{C;d}$ and the Vecchia likelihood estimator $\widehat{\lambda}_{V;d}$ for $\lambda=5$ and various choices of cutoff dimension $d$, cutoff distance $\delta$ and ordering. When $d=2$, $\widehat{\lambda}_{C;d}$ generally performs better than $\widehat{\lambda}_{V;d}$, but when $d>2$, the Vecchia likelihood estimator $\widehat{\lambda}_{V;d}$ has in most cases a better efficiency than $\widehat{\lambda}_{C;d}$. The gains are even (much) more substantial for the exchangeable model studied in the Supplementary Material. Counter-intuitively, the performance of the composite likelihood estimator generally has a worse performance for larger cutoff distances $\delta$, which is due to the re-use of information when including many similar (and highly dependent) likelihood terms in \eqref{eq:composited}. For example, when $d=2$, the relative efficiency of $\widehat{\lambda}_{C;d}$ is about $90\%$ for $\delta=1$ but only $60\%$ when $\delta=\sqrt{8}$. Moreover, it is not always true that the $\widehat{\lambda}_{C;d}$ has a better performance as $d$ increases (for fixed $\delta$); see the results for the powered exponential model in the Supplementary Material for an example. 
By contrast, the Vecchia likelihood estimator $\widehat{\lambda}_{V;d}$ is always found to have a better performance as $d$ increases (for fixed ordering), as expected from Proposition~\ref{prop:KL}.

\section{Simulation study in the max-stable case}\label{sec:MaxStable}
\subsection{Max-stable models}\label{subsec:maxstable}
As already noted, max-stable processes are the only possible limits of suitably renormalized pointwise maxima of independent and identically distributed random fields. 
More specifically, let $Y_1(\boldsymbol{s}),Y_2(\boldsymbol{s}),\ldots$ denote independent copies of the random field $Y(\boldsymbol{s})$, $\boldsymbol{s}\in\Real^2$, and let $M_n(\boldsymbol{s})=\max\{Y_1(\boldsymbol{s}),\ldots,Y_n(\boldsymbol{s})\}$ be the process of pointwise maxima. Furthermore, assume that $Y(\boldsymbol{s})$ satisfies the max-domain of attraction condition, i.e., there exist sequences $a_n(\boldsymbol{s})>0$ and $b_n(\boldsymbol{s})$ such that
\begin{equation}\label{eq:maxstable}
a_n^{-1}(\boldsymbol{s})\{M_n(\boldsymbol{s})-b_n(\boldsymbol{s})\}\Dto Z(\boldsymbol{s}),
\end{equation}
where the convergence holds in the sense of finite-dimensional distributions and the limit process $Z(\boldsymbol{s})$ has non-degenerate margins. Then, $Z(\boldsymbol{s})$ is a max-stable process, with generalized extreme-value (GEV) marginal distributions, and $Y(\boldsymbol{s})$ is said to be in the max-domain of attraction of $Z(\boldsymbol{s})$. Upon marginal transformation, we can assume without loss of generality that $Z(\boldsymbol{s})$ has unit Fr\'echet margins, i.e., $\pr\{Z(\boldsymbol{s})\leq z\}=\exp(-1/z)$, $z>0$. On the unit Fr\'echet scale, the max-stability property implies that for each $t>0$, and every finite collection of sites $\{\boldsymbol{s}_1,\ldots,\boldsymbol{s}_D\}\subset\Real^2$,
\begin{equation}\label{eq:maxstable2}
\pr\{Z(\boldsymbol{s}_1)\leq tz_1,\ldots,Z(\boldsymbol{s}_D)\leq tz_D\}^t=\pr\{Z(\boldsymbol{s}_1)\leq z_1,\ldots,Z(\boldsymbol{s}_D)\leq z_D\}.
\end{equation}
Thanks to \citet{deHaan:1984}'s representation, max-stable processes may be constructed as follows. Let $W_1(\boldsymbol{s}),W_2(\boldsymbol{s}),\ldots$ be independent copies of a non-negative process $W(\boldsymbol{s})$ with unit mean, and let $\xi_1,\xi_2,\ldots$ be points of a Poisson process with intensity $\xi^{-2}{\rm d}\xi$ on $(0,+\infty)$. Then the process defined as
\begin{equation}\label{eq:maxstable3}
Z(\boldsymbol{s})=\sup_{i=1,2,\ldots} \xi_i W_i(\boldsymbol{s})
\end{equation}
is a max-stable process with unit Fr\'echet margins and finite-dimensional distributions
\begin{equation}\label{eq:maxstable4}
\pr\{Z(\boldsymbol{s}_1)\leq z_1,\ldots,Z(\boldsymbol{s}_D)\leq z_D\}=\exp\{-V(z_1,\ldots,z_D)\}:=\exp\{-V(\boldsymbol{z})\},
\end{equation}
where the exponent function $V$ may be written in terms of the $W$ process as $V(\boldsymbol{z}):=V(z_1,\ldots,z_D)=\E[\max\{W(\boldsymbol{s}_1)/z_1,\ldots,W(\boldsymbol{s}_D)/z_D\}]$, $\boldsymbol{z}=(z_1,\ldots,z_D)^\top$. 
In particular, $V$ is homogeneous of order $-1$, i.e., $V(tz_1,\ldots,tz_D)=t^{-1}V(z_1,\ldots,z_D)$ for all $t>0$, and satisfies $V(z,\infty,\ldots,\infty)=1/z$ for any permutation of the arguments. 

To construct useful max-stable models, the challenge is to find flexible processes $W(\boldsymbol{s})$, for which the exponent function $V$ can be computed. Our simulation results below are based on the Brown--Resnick model \citep{Kabluchko.etal:2009}, which is a popular model for spatial extremes. In the Supplementary Material, we also explore the case of the multivariate logistic max-stable model \citep{Gumbel:1960,Gumbel:1961}, which is exchangeable in all variables.

From \eqref{eq:maxstable4}, the joint density of a parametric max-stable process may be expressed as 
\begin{equation}\label{eq:densitymaxstable}
f(\boldsymbol{z};\boldsymbol{\psi})=\exp\{-V(\boldsymbol{z};\boldsymbol{\psi})\}\sum_{\pi\in \mathcal P_D}\prod_{\tau\in \pi}\left\{-V_\tau(\boldsymbol{z};\boldsymbol{\psi})\right\},
\end{equation} 
where $\mathcal P_D$ is the collection of all partitions $\pi=\{\tau_1,\ldots,\tau_{|\pi|}\}$ of the set $\{1,\ldots,D\}$ (of cardinality $|\pi|$), $V_\tau$ denotes the partial derivative of the function $V$ with respect to the variables indexed by the set $\tau\subseteq\{1,\ldots,D\}$, and $\boldsymbol{\psi}\in\Psi\subseteq\Real^q$ denotes the vector of parameters; see \citet{Huser.etal:2016}, \citet{Castruccio.etal:2016} and \citet{Huser.etal:2019}. Because the number of terms in the sum on the right-hand side of \eqref{eq:densitymaxstable} is the Bell number, which grows more than exponentially with $D$, it is not possible to perform full likelihood inference for max-stable processes observed in moderate or high dimensions. \citet{Huser.etal:2019} proposed a stochastic EM-estimator but its applicability is still limited to relatively small dimensions (i.e., $D\leq 20$) for the Brown--Resnick model and similar max-stable models; see also \citet{Thibaud.etal:2016} and \citet{Dombry.etal:2017} for a similar inference approach from a Bayesian perspective. \citet{Padoan.etal:2010} proposed using a pairwise likelihood with weights appropriately chosen, while \citet{Castruccio.etal:2016} investigated the gains in efficiency of higher-order truncated composite likelihoods of the form \eqref{eq:composited} with $d\geq2$. In our simulations below, as well as in the Supplementary Material, we demonstrate that considerable efficiency gains can be obtained with the Vecchia approximation \eqref{eq:compositeVecchia} in most cases, while being scalable to high dimensions. 

\subsection{Results for the Brown--Resnick model}\label{subsec:BrownResnick}
We now consider the popular Brown--Resnick spatial process \citep{Kabluchko.etal:2009} constructed as in \eqref{eq:maxstable3}, where $W$ is a log-Gaussian process defined as
\begin{equation}\label{eq:BR}
W(\boldsymbol{s})=\exp\{\varepsilon(\boldsymbol{s})-\sigma(\boldsymbol{s})^2/2\},
\end{equation}
with $\sigma(\bm s)>0$  and $\varepsilon(\bm s)$ a Gaussian process with mean zero and variance $\sigma(\bm s)^2$. By analogy with the Gaussian exponential correlation model in Section~\ref{subsec:exponential}, we here explore the case where $\varepsilon(\boldsymbol{s})$ is stationary with exponential correlation function $\rho(\boldsymbol{h})=\exp(-\|\boldsymbol{h}\|/\lambda)$, $\lambda>0$, and $\sigma(\boldsymbol{s})\equiv \sigma>0$, although it would also possible to consider more complex Gaussian processes with stationary increments. When the Brown--Resnick process is observed at the sites $\boldsymbol{s}_1,\ldots,\boldsymbol{s}_D\in\mathcal S$, the corresponding exponent function may be written as
\begin{equation}\label{eq:BRV}
V(z_1,\ldots,z_D;\boldsymbol{\psi})=\sum_{j=1}^D{1\over z_j}\Phi_{D-1}\left(\boldsymbol{\eta}_j;\boldsymbol{\Sigma}_j\right),
\end{equation}
where the parameter vector is here $\boldsymbol{\psi}=(\lambda,\sigma)^\top\in\Psi=(0,+\infty)^2$, $\Phi_{D-1}(\cdot;\boldsymbol{\Sigma})$ denotes the $(D-1)$-dimensional Gaussian distribution with zero mean vector and covariance matrix $\boldsymbol{\Sigma}$, $\boldsymbol{\eta}_j$ is a $(D-1)$-dimensional vector with $i$th component $\log(z_i/z_j)/\Gamma_{ij}^{1/2} + \Gamma_{ij}^{1/2}/2$, $i\neq j$, and $\boldsymbol{\Sigma}_j$ is a $(D-1)\times(D-1)$ matrix with $(i_1,i_2)$-entry $(\Gamma_{i_1j}+\Gamma_{i_2j}-\Gamma_{i_1i_2})/\{2(\Gamma_{i_1j}\Gamma_{i_2j})^{1/2}\}$, $i_1,i_2\neq j$, where $\Gamma_{ij}=\Gamma(\boldsymbol{s}_j-\boldsymbol{s}_i)$ and $\Gamma(\boldsymbol{h})$ denotes the underlying variogram function, here equal to $\Gamma(\boldsymbol{h})=2\sigma^2\{1-\rho(\boldsymbol{h})\}$; see \citet{Huser.Davison:2013a} and \citet{Wadsworth.Tawn:2014}. Partial and full derivatives of \eqref{eq:BRV} needed for (composite) likelihood computations (recall \eqref{eq:densitymaxstable}) may be found in \citet{Wadsworth.Tawn:2014}. 

A dependence summary that is well suited for max-stable processes is the extremal coefficient. Considering two sites $\boldsymbol{s},\boldsymbol{s}+\boldsymbol{h}\in\mathcal S$ at spatial lag $\boldsymbol{h}$, the extremal coefficient is defined through the exponent function $V$ (restricted to these two sites) as
\begin{equation}\label{eq:extremalcoef}
\theta(\boldsymbol{h})=V(1,1;\boldsymbol{\psi})=2\Phi\{\Gamma^{1/2}(\boldsymbol{h})/2\}=2\Phi([2\sigma^2\{1-\rho(\boldsymbol{h})\}]^{1/2}/2),
\end{equation}
where $\Phi(\cdot)$ is the univariate Gaussian distribution function. When $\theta(\boldsymbol{h})=1$, the corresponding pair of max-stable variables $\{Z(\boldsymbol{s}),Z(\boldsymbol{s}+\boldsymbol{h})\}$ are perfectly dependent, and when $\theta(\boldsymbol{h})=2$ they are completely independent. Therefore, complete independence cannot be captured unless $\sigma\to\infty$. Alternative unbounded variograms, e.g., $\Gamma(\boldsymbol{h})=(\|\boldsymbol{h}\|/\lambda)^\alpha$ with $\lambda>0,\alpha\in(0,2]$, allow for complete independence as $\|\boldsymbol{h}\|\to\infty$.

In our simulations, we sample data at $D=100$ locations on the grid $\{1,\ldots, \sqrt{D}\}^2$, with $n=100$ independent replicates.
We fix $\sigma=10$ and consider $\lambda=1,\ldots,10$ (short to long range dependence), which yields the extremal coefficient functions plotted in Figure~\ref{fig:extremalcoefficient}. 
\begin{figure}[t!]
\centering
\includegraphics[width=0.49\linewidth]{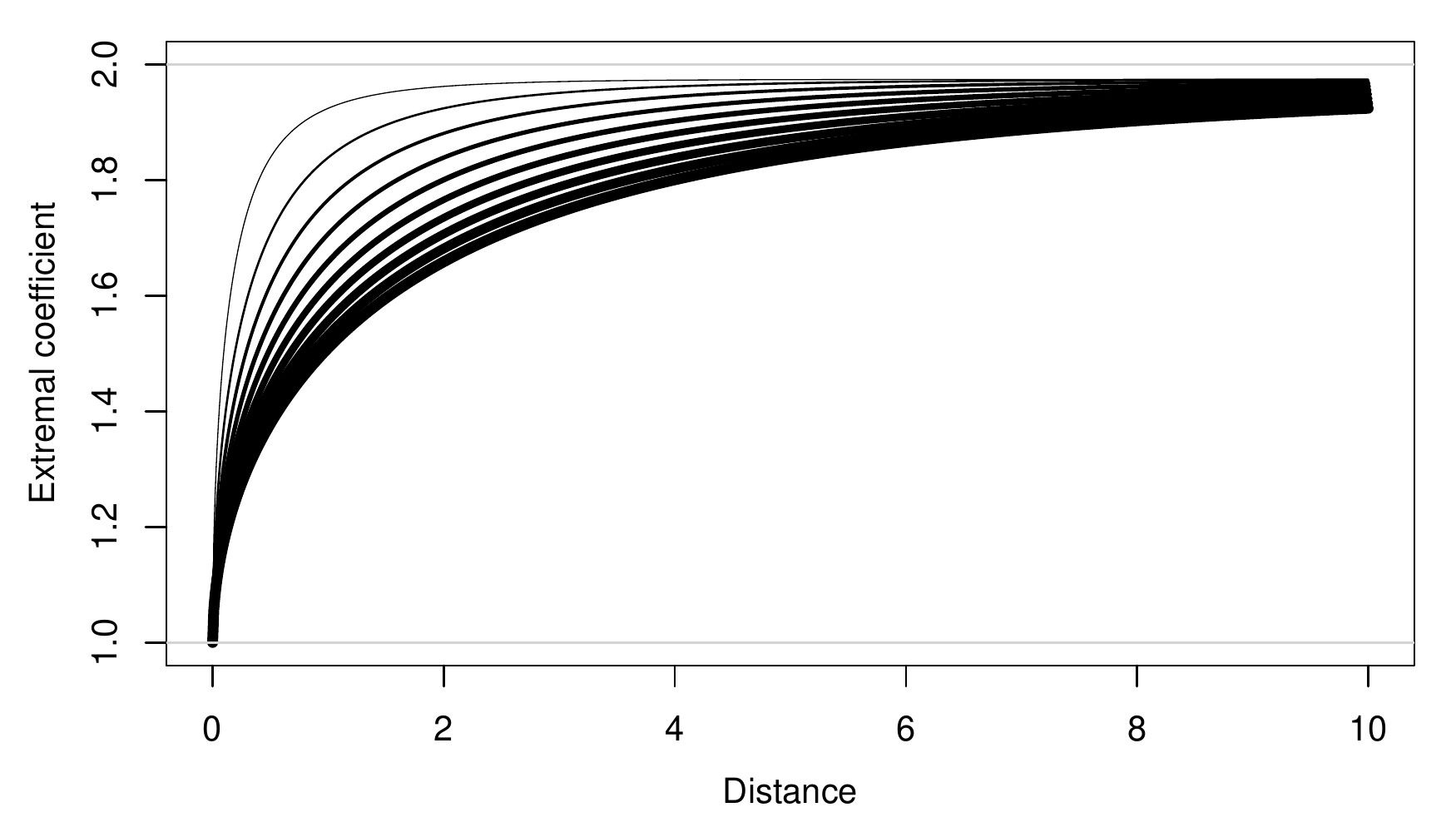}
\caption{Extremal coefficient curves for the Brown--Resnick model with variogram $\Gamma(\boldsymbol{h})=2\sigma^2\{1-\rho(\boldsymbol{h})\}$ and $\sigma=10$, $\lambda=1,\ldots,10$ (thin to thick curves).}\label{fig:extremalcoefficient}
\end{figure}
For each simulated dataset, we then estimate the range parameter $\lambda$ (treating $\sigma$ as known) using the composite likelihood estimators $\widehat\lambda_{C;d}$ and Vecchia likelihood estimators $\widehat\lambda_{V;d}$ described in Section~\ref{subsec:setting}, except that here we restrict ourselves to cutoff distances $\delta=1,\sqrt{2},2$, and cutoff dimensions $d=2,3,4,5$ for computational reasons. Larger values of $\delta$ and $d$ are considered for the logistic max-stable model in the Supplementary Material. We repeated the experiments $1024$ times to compute the estimators' bias, standard deviation and root mean squared error (RMSE).

\begin{figure}[t!]
\centering
\includegraphics[width=0.49\linewidth]{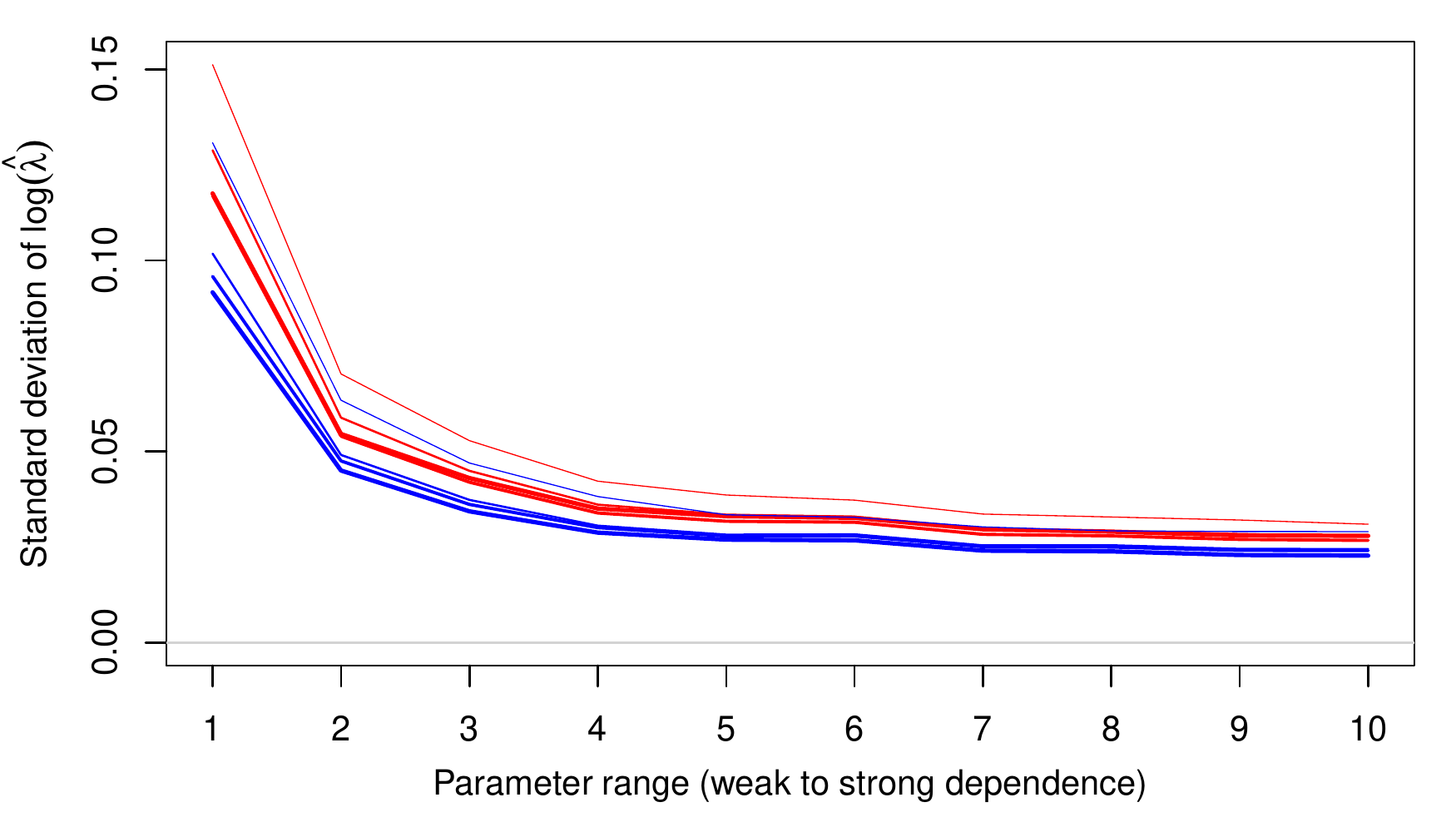}
\includegraphics[width=0.49\linewidth]{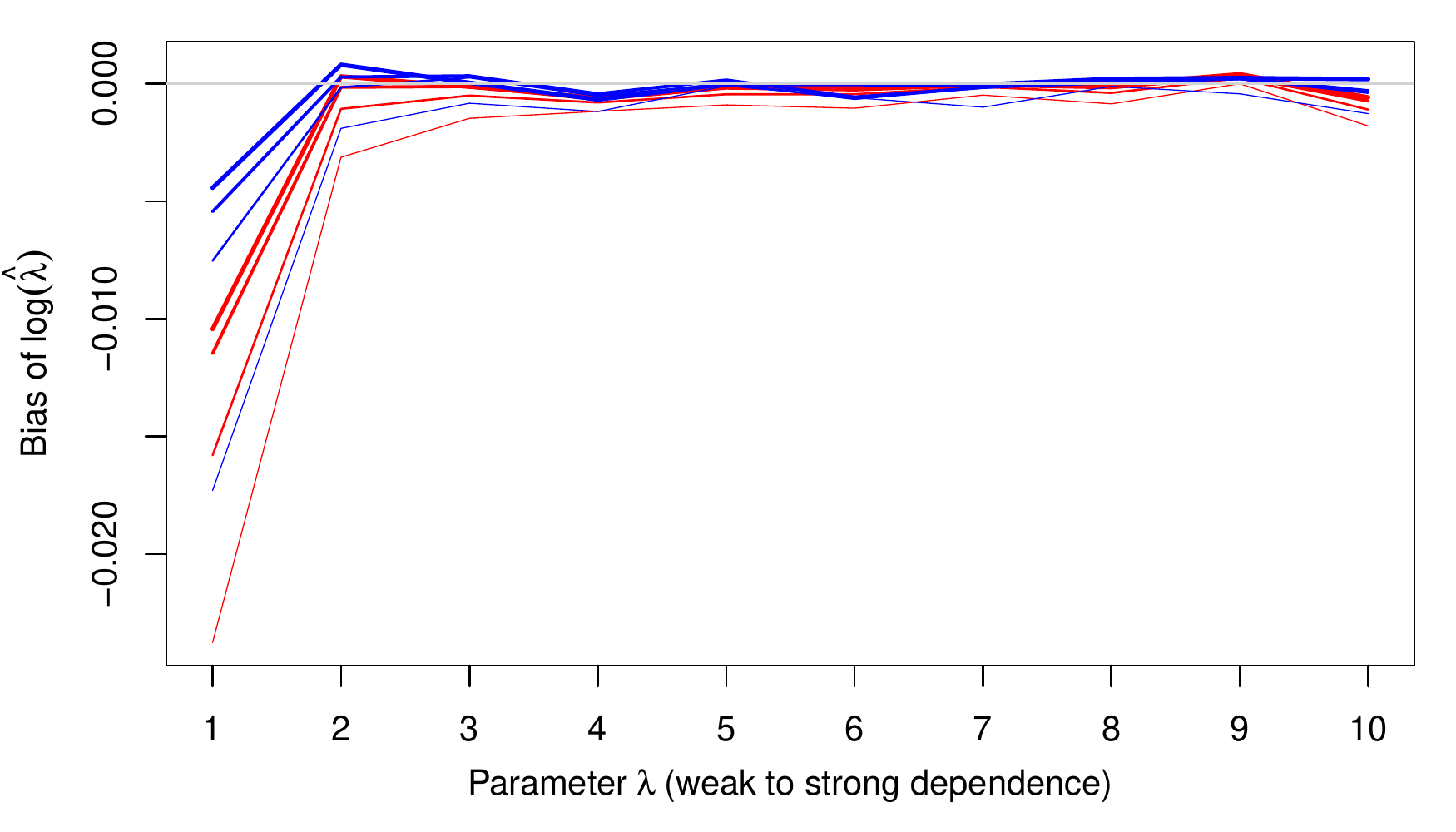}
\includegraphics[width=0.49\linewidth]{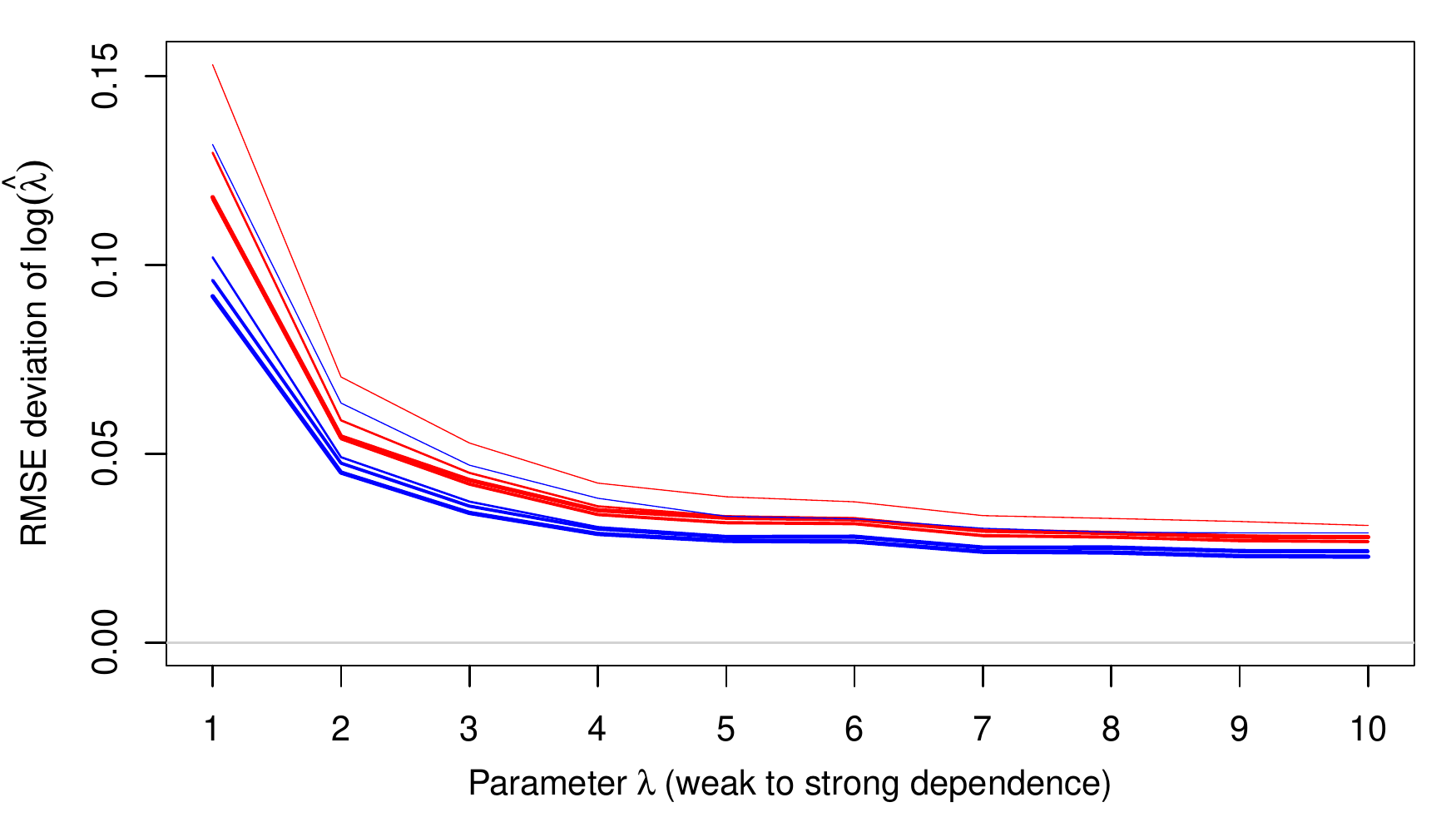}
\includegraphics[width=0.49\linewidth]{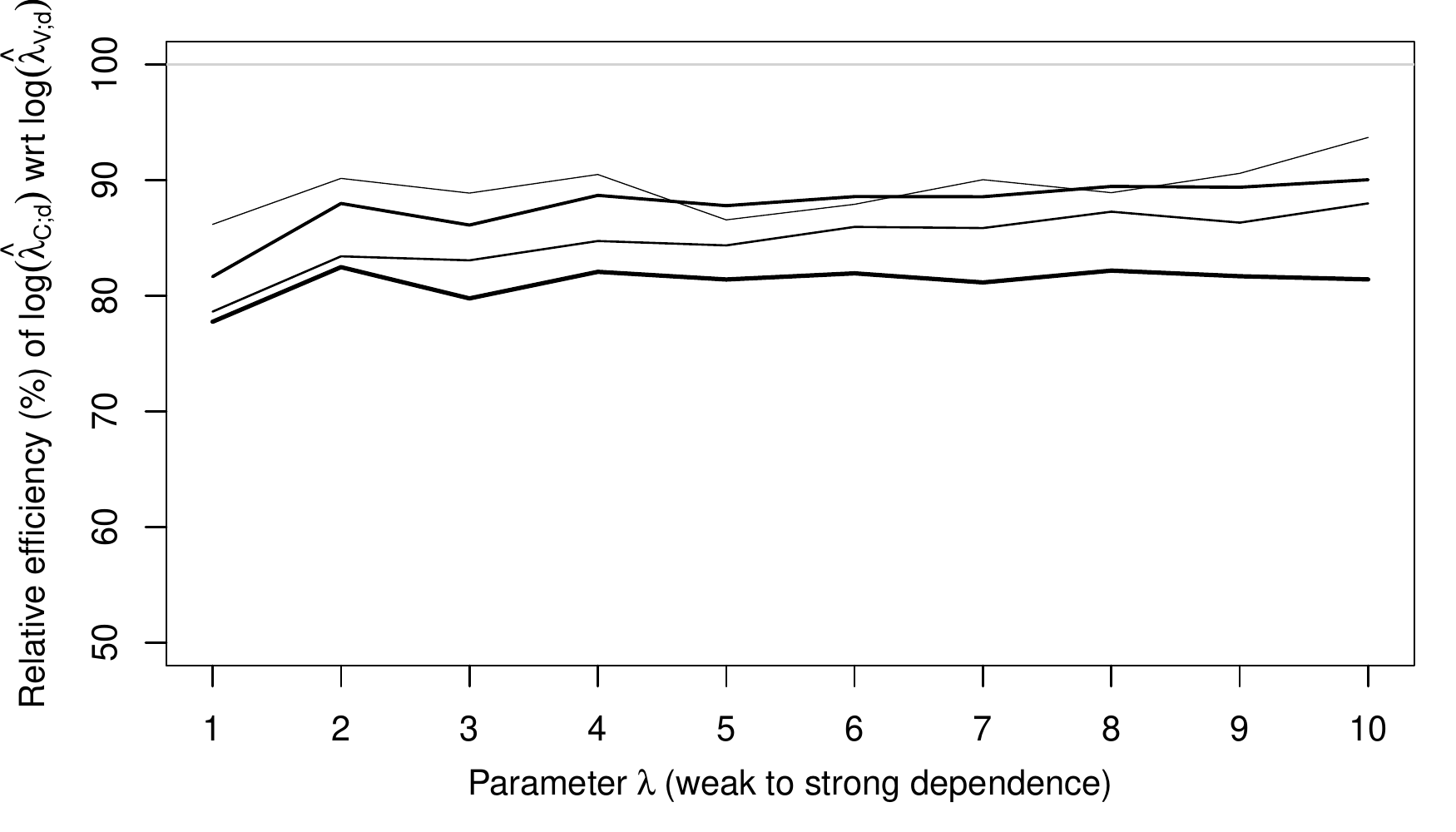}
\caption{Standard deviation (top left), bias (top right) and root mean squared error (bottom left) of the composite likelihood estimator $\log(\widehat{\lambda}_{C;d})$ (in red) with $d=2,3,4,5$ (thin to thick curves) and cutoff distance $\delta=2$ (keeping about $10\%$ of pairs), and the Vecchia likelihood estimator $\log(\widehat{\lambda}_{V;d})$ (in blue) with $d=2,3,4,5$ (thin to thick curves) and based on a coordinate ordering. The bottom right panel shows the relative efficiency of $\log(\widehat{\lambda}_{C;d})$ with respect to $\log(\widehat{\lambda}_{V;d})$ for $d=2,3,4,5$ (thin to thick curves). We consider here the Brown--Resnick model with parameters $\sigma=10$ and $\lambda=1,\ldots,10$ (weak to strong dependence).}\label{fig:EfficiencyBRMaxStable}
\end{figure}
The results are summarized in Figure~\ref{fig:EfficiencyBRMaxStable} (with $\delta=2$ for $\widehat\lambda_{C;d}$ and coordinate-based ordering for $\widehat\lambda_{V;d}$). Essentially, the bias of all estimators is negligible compared to the standard deviation, and the Vecchia likelihood estimator $\log(\widehat\lambda_{V;d})$ is about $10$--$20\%$ more efficient than the composite likelihood estimator $\log(\widehat\lambda_{C;d})$ for any dimension $d$ and range parameter $\lambda$ (with the efficiency defined as the ratio of RMSEs). The RMSE of all estimators with other cutoff distances $\delta$ and orderings is reported in Table~\ref{tab:CompositeVecchiaBRMaxStable}. The results are consistent with our previous theoretical findings in the Gaussian case, i.e., the Vecchia likelihood estimator always has higher efficiency than the composite likelihood estimator, except in the case with $d=2$ and $\delta=1$. Moreover, the Vecchia likelihood estimator performs better with the coordinate or middle-out orderings.

\begin{table}[t!]
\centering
\caption{Root mean squared error ($\times100$) for the composite likelihood estimator $\log(\widehat{\lambda}_{C;d})$ (left) with $d=2,3,4,5$ and cutoff distance $\delta=1,\sqrt{2},2$, and of the Vecchia likelihood estimator $\log(\widehat{\lambda}_{V;d})$ (right) with $d=2,3,4,5$ and coordinate-based ($p_1$), random ($p_2$), middle out ($p_3$) and maximum-minimum ($p_4$) orderings. We consider here the max-stable Brown--Resnick model with parameters $\sigma=10$ and $\lambda=5$, simulated in dimension $D=100$ with $n=100$ replicates.}\label{tab:CompositeVecchiaBRMaxStable}
\vspace{5pt}
\begin{tabular}{r|>{\centering}p{1.3cm}>{\centering}p{1.3cm}>{\centering}p{1.3cm}|>{\centering}p{1.3cm}>{\centering}p{1.3cm}>{\centering}p{1.3cm}>{\centering\arraybackslash}p{1.3cm}|}
& \multicolumn{3}{c|}{Composite estimator $\log(\widehat{\lambda}_{C;d})$} & \multicolumn{4}{c|}{Vecchia estimator $\log(\widehat{\lambda}_{V;d})$}\\
& \multicolumn{3}{c|}{Cutoff distance $\delta$} & \multicolumn{4}{c|}{Ordering}\\
$d$ & $1$ & $\sqrt{2}$ & $2$ & $p_1$ & $p_2$ & $p_3$ & $p_4$ \\ \hline
$2$ & $3.08$ & $3.43$ & $3.86$ & $3.34$ & $3.93$ & $3.56$ & $4.54$ \\
$3$ & --- & $3.08$ & $3.34$ & $2.82$ & $3.05$ & $2.83$ & $3.18$ \\
$4$ & --- & $2.93$ & $3.17$ & $2.79$ & $2.81$ & $2.77$ & $2.87$ \\
$5$ & --- & --- & $3.31$ & $2.69$ & $2.71$ & $2.68$ & $2.70$ \\
\end{tabular}
\end{table}

The computational time of each estimator is reported in Table~\ref{tab:CompositeVecchiaBRMaxStableTime}. We also provide an estimate of the computational time for the composite likelihood estimator $\widehat\lambda_{C;d}$ with $\delta=\sqrt{5},\sqrt{8}$ by extrapolating the times obtained with $\delta=2$ by assuming that these are proportional to the number of likelihood terms reported in Table~\ref{tab:likelihoodterms}. While the computational time for $\widehat\lambda_{C;d}$ grows fast as a function of the cutoff distance $\delta$, it is essentially the same for each ordering considered for $\widehat\lambda_{V;d}$. Moreover, the computational time remains fairly moderate as $d$ increases for $\widehat\lambda_{V;d}$, but it can be extremely large for $\widehat\lambda_{C;d}$ when $d=4,5$. 

Overall, when $d=2$, the best solution is to use $\widehat\lambda_{C;d}$ but when $d>2$ the best solution is to use $\widehat\lambda_{V;d}$ for reasons of both statistical efficiency and computational efficiency.

\begin{table}[t!]
\centering
\caption{Computational time (hr) for the composite likelihood estimator $\widehat{\lambda}_{C;d}$ (left) with $d=2,3,4,5$ and cutoff distance $\delta=1,\sqrt{2},2,\sqrt{5},\sqrt{8}$, and of the Vecchia likelihood estimator $\log(\widehat{\lambda}_{V;d})$ (right) with $d=2,3,4,5$ and coordinate-based ($p_1$), random ($p_2$), middle out ($p_3$) and maximum-minimum ($p_4$) orderings. We consider here the max-stable Brown--Resnick model with parameters $\sigma=10$ and $\lambda=5$, simulated in dimension $D=100$ with $n=100$ replicates. Numbers with an asterisk are extrapolated from the $\delta=2$ case by assuming that the computational time for $\widehat{\lambda}_{C;d}$ is proportional to the numbers reported in Table~\ref{tab:likelihoodterms}.}\label{tab:CompositeVecchiaBRMaxStableTime}
\vspace{5pt}
\begin{tabular}{r|>{\centering}p{1.3cm}>{\centering}p{1.3cm}>{\centering}p{1.3cm}>{\centering}p{1.3cm}>{\centering}p{1.4cm}|>{\centering}p{1.3cm}>{\centering}p{1.3cm}>{\centering}p{1.3cm}>{\centering\arraybackslash}p{1.3cm}|}
& \multicolumn{5}{c|}{Composite estimator $\widehat{\lambda}_{C;d}$} & \multicolumn{4}{c|}{Vecchia estimator $\widehat{\lambda}_{V;d}$}\\
& \multicolumn{5}{c|}{Cutoff distance $\delta$} & \multicolumn{4}{c|}{Ordering}\\
$d$ & $1$ & $\sqrt{2}$ & $2$ & $\sqrt{5}$ & $\sqrt{8}$ & $p_1$ & $p_2$ & $p_3$ & $p_4$ \\ \hline
$2$ & $0.047$ & $0.097$ & $0.153$ & $0.241^\star$ & $0.280^\star$ & $0.026$ & $0.025$ & $0.025$ & $0.026$ \\
$3$ & --- & $0.706$ & $1.704$ & $5.376^\star$ & $7.353^\star$ & $0.229$ & $0.228$ & $0.235$ & $0.223$ \\
$4$ & --- & $0.633$ & $3.349$ & $29.463^\star$ & $49.761^\star$ & $1.100$ & $1.023$ & $1.111$ & $0.960$ \\
$5$ & --- & --- & $1.315$ & $66.394^\star$ & $151.852^\star$ & $3.386$ & $3.402$ & $3.398$ & $3.331$ \\
\end{tabular}
\end{table}

To investigate the scalability of the Vecchia likelihood estimator, we repeated our experiments for the Brown--Resnick model with parameters $\sigma=10$ and $\lambda=5$ in dimensions $D=25,49,100,144,225,400,625,1024$. 
Timing results reported in the Supplementary Material demonstrate that, as expected, the computational time is linear in $D$, but it grows fast in $d$. In fact, the time is roughly proportional to the Bell number of order $d$ (i.e., the cardinality of $\mathcal P_d$, the set of partitions of $\{1,\ldots,d\}$, recall \eqref{eq:densitymaxstable}). Nevertheless, with moderate values of $d$, the linearity in $D$ makes it possible to tackle high-dimensional extreme-value problems using the Vecchia likelihood estimator, while retaining fairly high efficiency.

Further simulations (not shown) show that similar results hold in the max-domain of attraction of the Brown--Resnick model, when simulating block maxima from the exponential factor copula model \citep{Krupskii.etal:2018,CastroCamilo.Huser:2019} with standard Pareto margins (i.e., when both the dependence structure and the marginal distributions are misspecified), with block size equal to $10^4$. When the block size is smaller, such as $100$ or $1000$, the sub-asymptotic bias is quite large but it is comparable across all estimators. Moreover, further results in the Supplementary Material show that for the exchangeable logistic max-stable model, even more substantial gains in efficiency can be obtained by considering the Vecchia likelihood estimator than reported here for the Brown--Resnick model.

\section{Data application}\label{sec:application}
\subsection{Red Sea surface temperature dataset}
The spatial modeling of sea surface temperature (SST) extremes plays a key role in estimating changes in the Earth's climate \citep{Bulgin.etal:2020} and understanding how ecosystems and marine life may be affected by global warming \citep{Tittensor.etal:2021}. While estimating marginal trends in SST observations is important for future predictions and risk planning and mitigation, characterizing their spatial tail dependence structure is needed to estimate extreme SST hotspots \citep{Hazra.Huser:2021}, and to assess the spatial extent of regions simultaneously affected by single extreme temperature events (see, e.g., \citealp{Zhong.etal:2021}). In our real data application, we analyze (standardized) annual maxima of SST anomalies for the whole Red Sea, obtained on a fine grid of 1043 locations for 31 years from 1987 to 2015. The spatial grid is displayed in Figure~\ref{fig:variogram_contour}. 
\begin{figure}[!t]
	\centering
       	\includegraphics[width=0.8\textwidth]{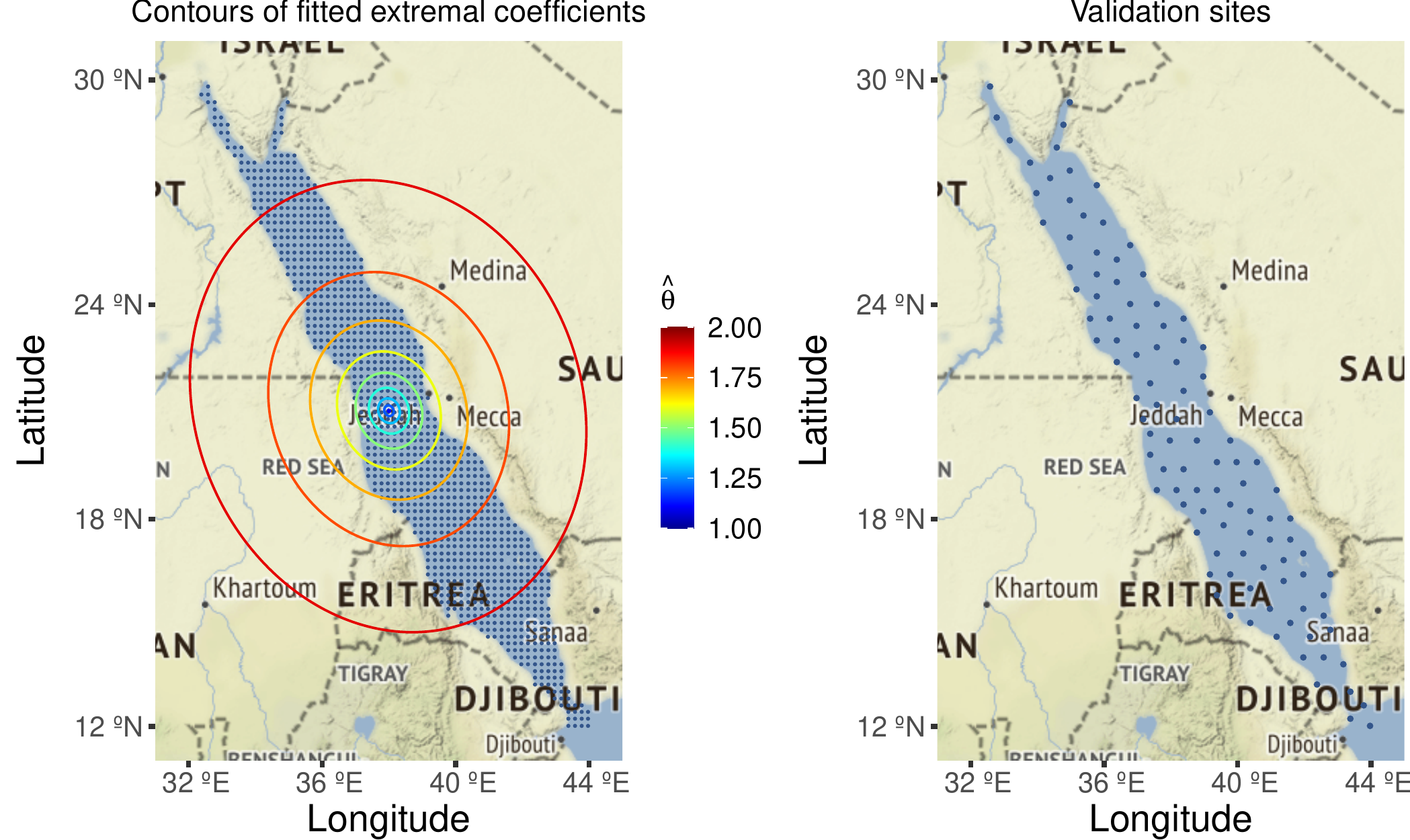}
		\caption{\emph{Left:} Map of the study domain, with the spatial grid (dots) covering the Red Sea at which SST data are available. Ellipses show contours of the fitted extremal coefficient function $\widehat\theta(\boldsymbol{h})=1.1,\ldots,1.9$, with respect to the grid cell at the center, obtained by fitting the anisotropic Brown--Resnick max-stable model using the best Vecchia likelihood estimator. \emph{Right:} Validation locations used in our cross-validation study.}
		\label{fig:variogram_contour}
\end{figure}
The Red Sea is a semi-enclosed sea with a very rich biodiversity, including abundant coral species that are often highly sensitive to modest SST increases. Before detailing our modeling of spatial extremal dependence, we first briefly summarize how the original data were pre-processed to obtain temperature anomalies, and how annual maxima thereof were then modeled and transformed to a common scale.  

The original data product was obtained from the Operational Sea Surface Temperature and Sea Ice Analysis (OSTIA; \citealp{donlon2012operational}), which produces satellite-derived daily SST measurements at a very high $0.05^\circ\times 0.05^\circ$ spatial resolution; see \citet{Huser:2021} for a detailed exploratory analysis of this dataset, and \citet{Hazra.Huser:2021} for a comprehensive spatial analysis. In our study, we subsampled the spatial locations while still maintaining good spatial coverage (i.e., keeping one measurement about every 18 kilometers in each direction), thus yielding $11{,}315$ fields of $D=1043$ highly spatially dependent daily observations, when discarding February $29$th in leap years to keep the same number of observations each year. Because daily temperature data feature seasonality, and a possible time trend due to global warming, which also varies across space, it is therefore crucial to first detrend the marginal distributions and standardize them to a common scale, before modeling dependencies among SST extremes with a max-stable process. To estimate spatiotemporal trends (both in the mean and the variance of daily temperatures) in a very flexible way, we fitted a semiparametric normal model to all temperature observations within a certain radius of each spatial location, using a local likelihood approach. This yields very accurate spatiotemporal trend estimates, due to our large sample size. Then, after standardizing the data based on the fitted semiparametric model, we extracted annual maxima of SST anomalies and fitted a generalized extreme-value (GEV) distribution, which we then used to transform annual SST maxima to a common unit Fr\'echet scale by means of the probability integral transform. For further details on marginal modeling, see the Supplementary Material.

In the next section, we model the dependence structure of the standardized annual maxima by fitting isotropic and anisotropic Brown--Resnick max-stable processes, and we focus on investigating differences between the performance of the traditional composite likelihood and the Vecchia likelihood approximation methods.

\subsection{Dependence modeling of the Red Sea temperature extremes}
To fit the max-stable Brown--Resnick model, we first need to specify the variogram function $\Gamma$ of the underlying Gaussian process $\varepsilon(\bm s)$ in \eqref{eq:BR}, which determines the form and range of dependencies that can be captured. In our simulation study, we used a bounded variogram of the form $\Gamma(\bm h)=2\sigma^2\{1-\rho(\bm h)\}\leq 2\sigma^2$, based on the stationary and isotropic exponential correlation function $\rho(\bm h)$, for comparison purposes with the Gaussian setting. Such a comparison is important to make sure the exact theoretical efficiency results in the Gaussian case (Section~\ref{sec:Gaussian}) can be generalized and carried over by analogy to the max-stable case (Section~\ref{sec:MaxStable}). However, using a bounded variogram also implies long-range dependence as the extremal coefficient is bounded away from independence at any spatial distance, i.e., $\theta(\bm h)<2$. This is problematic in our data application, since we model SST anomaly maxima over a very large domain, namely the whole Red Sea, for which complete independence prevails at large distances. This suggests that we should use an unbounded variogram in our application. Moreover, given that the Red Sea has a geographically elongated shape, that it is only connected to the World Ocean through the artificial Suez canal in the North and the Gulf of Aden in the South, and because of the complex hydrodynamic patterns that these physical constraints entail, it makes sense to use an anisotropic variogram function. Therefore, the variogram model that we use here is 
\begin{equation}
\Gamma(\bm s_1,\bm s_2) = \E\left[\left\{\varepsilon(\bm s_1) - \varepsilon(\bm s_2)\right\}^2\right] =2\left({\sqrt{(\bm s_1 - \bm s_2)^T A  (\bm s_1 - \bm s_2)}\over\lambda}\right)^\alpha, 	
\end{equation}
where $\lambda\in (0,\infty)$ is a range parameter, $\alpha \in (0,2)$ is a smoothness parameter, and $A$ is the rotation matrix, which has the form
\begin{equation}
	A =  \begin{bmatrix} \cos(\theta) & -\sin(\theta) \\  \sin(\theta) & \cos(\theta)  \end{bmatrix} \begin{bmatrix}
		1 & 0 \\ 0 & a \end{bmatrix} \begin{bmatrix} \cos(\theta) & \sin(\theta) \\  -\sin(\theta) & \cos(\theta)  \end{bmatrix},
\end{equation}
where $\theta\in \mathbb (-\pi/2,\pi/2)$ is the rotation angle, and $a>0$ determines the extent of anisotropy, with $a=1$ corresponding to isotropy. The dependence parameter vector, $\bm\psi$, thus consists of four parameters, i.e., $\bm \psi = (\alpha,\lambda,a,\theta)^\top\in\Psi=(0,2)\times(0,\infty)^2\times(-\pi/2,\pi/2)$. 

To fit the Brown--Resnick model, we consider the (traditional) weighted composite likelihood method, as well as the proposed Vecchia likelihood approximation, which is expected to boost both the computational and statistical efficiency according to the theoretical and simulation results reported in Sections~\ref{sec:Gaussian} and \ref{sec:MaxStable}. For the composite likelihood method, we consider pairwise ($d=2$) and triplewise ($d=3$) likelihoods, but cannot consider higher values of $d>3$ due to computational reasons. For each cutoff dimension $d$, we choose binary weights $\mathbb I(\max_{\{i,j\}}\|\bm h_{\{i,j\}}\|\leq \delta)$ as in Section~\ref{subsec:setting} with cutoff distance $\delta$ specified in such a way that the resulting composite likelihood function contains $m\times D$ terms in total, where $D=1043$ is the number of locations and $m=2,4,6,8$. Therefore, $m=2$ roughly corresponds to including $1$st-order neighbors only, $m=4$ roughly corresponds to including $2$nd-order neighbors only, and so forth, though the complex Red Sea boundaries mean that a few additional higher-order neighbors (i.e., at slightly longer distances) may also be included. For the Vecchia likelihood approximation, we consider the cutoff dimensions $d=2,3,4,5$ 
and use the orderings described in Section~\ref{subsec:setting}: coordinate-based ($p_1$), random ($p_2$), middle-out ($p_3$), and maximum-minimum ($p_4$).

Because the data are (approximately) gridded, there are only a few unique pairwise distances that characterize the likelihood contributions involved in the composite and Vecchia likelihoods. This, combined with the fact that SST maxima are highly spatially dependent, implies that the range parameter $\lambda$ and the smoothness parameter $\alpha$ may not be easily identifiable, and we have indeed found it difficult to estimate them both simultaneously. In our analysis, we thus fix the smoothness parameter to three representative values, i.e., $\alpha=0.5$ (rough field), $\alpha=1$ (intermediate case, similar to a Brownian motion), and $\alpha=1.5$ (smooth field), then estimate the parameter vector $\bm\psi_\alpha=(\lambda,a,\theta)^\top$ by maximizing the composite and Vecchia likelihoods for fixed $\alpha$, and subsequently select the best value of $\alpha$ by cross-validation. 

An extensive cross-validation study is hence conducted to compare the goodness-of-fit and prediction performance of the different fitted models, obtained by (i) varying the value of $\alpha\in\{0.5,1,1.5\}$; (ii) considering the general anisotropic Brown--Resnick model or its isotropic restriction (with $a=1$, $\theta=0$ fixed); and (iii) using different inference approaches (composite or Vecchia likelihoods under different settings). Precisely, we leave out a validation set consisting of about $10\%$ locations (i.e., exactly $105$ out of $1043$), selected as the last $10\%$ locations from the maximum-minimum ordering (recall Section~\ref{subsec:setting}). This ensures that the validation locations are well spread-out throughout the whole Red Sea; see Figure~\ref{fig:variogram_contour}. Then, we calculate the sum of the negative conditional log-density for each spatiotemporal point from the validation set, given the values at its four closest neighbors from the training set for the same temporal replicate. In other words, the (negative) log-score we consider is
\begin{equation}
\label{eq:score}
S = -\sum_{i=1}^n \sum_{j\in \calV} \log f(z_{i;j}\mid \bm z_{i;\calT(j)};\widehat{\bm\psi}_\alpha)
\end{equation}
where $\calV$ is the index set of validation locations, $\calT(j)$ is the index set of training locations that are the four closest neighbors of the $j$th location $\bm s_j$, $\bm z_{i;\calT(j)}$ is the corresponding observation vector from these neighboring locations from the training set, $f$ is the Brown--Resnick density, and $\widehat{\bm\psi}_\alpha$ is the estimated parameter vector (for fixed $\alpha$), obtained for each inference method. Since we consider only four nearest neighbors to calculate the score \eqref{eq:score}, the densities involved are of maximum dimension five, which is still computationally feasible.
 
The cross-validation results are reported in Table~\ref{tab:scores}. 
\begin{sidewaystable}
\caption{Negative conditional log score $S$ in \eqref{eq:score} for each scenario based on a selection of $10\%$ evenly spread validation locations. The three values in each cell correspond to $\alpha=0.5,1,1.5$, respectively. The best scenarios in each block (Vecchia/Composite $\times$ Anisotropic/Isotropic) are highlighted in bold. Orderings $p_1$, $p_2$, $p_3$, $p_4$ correspond to coordinate-based, random, middle-out, and maximum-minimum orderings, respectively. The maximum dimension of likelihood contributions is $d$ (i.e., for Vecchia methods, with at most $d-1$ variables in the conditioning sets), and $m$ relates to the cutoff distance for composite likelihoods.}
\label{tab:scores}
\begin{adjustbox}{width=\textheight}
\begin{tabular}{rccccc | ccl}
\toprule
& \multicolumn{5}{c|}{Vecchia} & \multicolumn{3}{c}{Composite} \\ \cmidrule(lr){2-6} \cmidrule(lr){7-9}
& Ordering & $d=2$ & $d=3$ & $d=4$ & $d=5$ & $d=2$ & $d=3$  & $m$ \\ \midrule
\multirow{4}{*}{Anisotropic}  & $p_1$ & $97.6/92.5/107.7$ & $96.6/91.6/107.1$ & $97.7/91.4/113$ & $98/91.4/112.7$ & $130.1/\bm{115.9}/1053.6$ & $125.8/117.6/946.3$ & $2$ \\ \cmidrule(lr){2-9}
& $p_2$ & $101.4/93.2/118$ & $97.7/91.6/105$ & $100.6/91.4/105.5$ & $97.3/91.5/105.1$ & $129.8/117/1101.1$ & $126/117/930.8$ & $4$ \\  \cmidrule(lr){2-9}
& $p_3$ & $99.2/93.2/107$ & $95.9/91.8/104.2$ & $97.2/91.4/106.1$ & $97.8/\bm{91.4}/105.2$ & $129.7/117.4/1097.3$ & $126.1/116.9/927.6$ & $6$ \\  \cmidrule(lr){2-9}
& $p_4$ & $102.8/93.6/126.1$ & $98/91.8/106.4$ & $101.4/91.5/104.9$ & $97.1/91.5/106.6$ & $129.7/117.1/1061.7$ & $126.3/116.9/929.5$ & $8$ \\ 
\midrule
\multirow{4}{*}{Isotropic}  & $p_1$ & $98/93.1/108.4$ & $97.2/92.2/107$ & $98.1/92/106.2$ & $98.4/91.9/106.3$ & $122.5/133.6/1088.4$ & $\bm{117.7}/128.6/845.2$ & $2$ \\ \cmidrule(lr){2-9}
& $p_2$ & $100.4/93/110.8$ & $98.2/92.1/106.4$ & $97.9/92/106.6$ & $98.1/92/106.3$ & $122.5/134.4/1101.6$ & $117.7/128.4/841.4$ & $4$ \\  \cmidrule(lr){2-9}
& $p_3$ & $98.1/93.2/108.1$ & $96.5/92/106.9$ & $97.5/\bm{91.9}/106.8$ & $98.1/91.9/106.7$ & $122.2/135.7/1124.6$ & $117.8/128.4/840.4$ & $6$ \\  \cmidrule(lr){2-9}
& $p_4$ & $101.7/92.8/114.4$ & $98.5/92.2/106.8$ & $97.6/92/106.6$ & $97.9/92.1/107$ & $122.3/135.8/1105.2$ & $117.9/128.4/841.9$ & $8$ \\ \bottomrule 
\end{tabular}
\end{adjustbox}

\vspace{15pt}

\centering
\caption{Computational time used in each scenario, measured in seconds. For further details, see the caption of Figure~\ref{tab:scores}.}
\label{tab:times}
\begin{adjustbox}{width=\textheight}
\begin{tabular}{rccccc | ccl}
\toprule
& \multicolumn{5}{c|}{Vecchia} & \multicolumn{3}{c}{Composite} \\ \cmidrule(lr){2-6} \cmidrule(lr){7-9}
& Ordering & $d=2$ & $d=3$ & $d=4$ & $d=5$ & $d=2$ & $d=3$  & $m$ \\ \midrule
\multirow{4}{*}{Anisotropic}  & $p_1$ & $201/24/120$ & $1454/513/894$ & $2790/2384/2051$ & $6319/6699/9681$ & $148/372/171$ & $1266/1605/2304$ & $2$ \\ \cmidrule(lr){2-9}
& $p_2$ & $329/155/152$ & $796/469/527$ & $1406/1703/2779$ & $4796/6399/6928$ & $573/377/335$ & $2243/2442/3833$ & $4$ \\  \cmidrule(lr){2-9}
& $p_3$ & $284/137/93$ & $1026/459/868$ & $3593/2245/2588$ & $13670/5335/8566$ & $480/548/565$ & $3291/3381/3535$ & $6$ \\  \cmidrule(lr){2-9}
& $p_4$ & $213/190/115$ & $840/464/624$ & $1836/2654/2330$ & $9902/5032/7156$ & $688/878/722$ & $4972/4823/6493$ & $8$ \\  \midrule
\multirow{4}{*}{Isotropic}  & $p_1$ & $37/13/18$ & $104/56/176$ & $372/290/470$ & $1136/829/1258$ & $21/31/33$ & $137/262/253$ & $2$ \\ \cmidrule(lr){2-9}
& $p_2$ & $20/16/19$ & $97/70/92$ & $350/242/379$ & $998/954/1056$ & $39/56/58$ & $417/452/462$ & $4$ \\  \cmidrule(lr){2-9}
& $p_3$ & $19/13/30$ & $97/73/84$ & $340/273/414$ & $1134/857/1442$ & $55/81/90$ & $501/613/623$ & $6$ \\  \cmidrule(lr){2-9}
& $p_4$ & $19/12/18$ & $101/62/84$ & $370/218/332$ & $1042/722/1179$ & $69/100/108$ & $560/937/843$ & $8$ \\ \bottomrule 
\end{tabular}
\end{adjustbox}
\end{sidewaystable}
Strikingly, the Vecchia likelihood approximation is uniformly better than its composite likelihood counterpart, except in two cases ($d=2$, $\alpha=1.5$, using random or maximum-minimum ordering), which give slightly worse results than the best composite likelihood estimator all settings combined. Overall, the Vecchia likelihood estimator thus clearly outperforms the composite likelihood estimator by a large margin, whatever the ordering (for Vecchia estimators) and cutoff distance (for composite estimators). It is also interesting to note that composite methods in the isotropic case do not even find that $\alpha=1$ is better than $\alpha=0.5$, while all other cases give strong support for $\alpha=1$. Moreover, composite methods perform very poorly when $\alpha=1.5$, while the fits are much more reasonable for Vecchia methods, suggesting that composite methods are less reliable. In terms of computational time, reported in Table~\ref{tab:times}, the Vecchia likelihood estimator is also often much faster than the composite likelihood estimator for fixed $d$. In particular, the Vecchia likelihood estimator with $d=2$ 
only takes a few minutes to run, and already outperforms the best traditional composite likelihood estimator in terms of its log score, even when $d=3$. Moreover, the increase in computational cost as the cutoff dimension $d$ increases is often very large for traditional composite likelihoods, but relatively moderate for the Vecchia likelihood approach. Hence, the Vecchia likelihood estimator is both statistically and computationally more efficient, and easy to implement, which provides strong support for using it in practice.

Overall, our proposed inference approach based on the Vecchia approximation thus delivers excellent results. From Table~\ref{tab:scores}, it is evident that the best results are obtained for $\alpha=1$, with moderate but visible improvements in the anisotropic case. In the best case (anisotropic model with $\alpha=1$, fitted using the Vecchia estimator with middle-out ordering and four conditioning sites, i.e., $d=5$), the parameter estimates are 
$\widehat\lambda=113.69$~km, $\widehat a=0.73$ and $\widehat\theta=0.40$, with $95\%$ confidence intervals $\lambda\in(98.49,128.25)$, $a\in(0.65,0.79)$, and $\theta \in (0.12,0.50)$, obtained from a parametric bootstrap with $300$ bootstrap replicates. These confidence intervals are very similar to those obtained from the (computationally cheaper) jackknife method: $\lambda\in(96.85,130.54)$, $a\in(0.66,0.80)$, and $\theta\in (0.29,0.51)$. The confidence intervals for $a$ clearly exclude the value 1, suggesting the data are indeed anisotropic. Figure~\ref{fig:variogram_contour} displays the contours of the fitted bivariate extremal coefficient $\theta(\boldsymbol{h})=1.1,\ldots,1.9$ with respect to the location at the center of the Red Sea, as described in \eqref{eq:extremalcoef}, based on the best model. The elliptical shape of the estimated contours is well aligned with the geometry of the Red Sea, with stronger dependence along its main axis, which is physically meaningful.

To further compare the Vecchia and composite likelihood approaches, we study the goodness-of-fit of the best-fitting models in each case by comparing the estimated bivariate extremal coefficients, binned across distance classes, with their empirical counterparts. Figure~\ref{fig:Ext_compare} shows the estimated extremal coefficients, plotted as a function of the Mahalanobis distance $d(\bm s_1,\bm s_2) = \sqrt{(\bm s_1-\bm s_2)^T\widehat A(\bm s_1-\bm s_2)}$ where $\widehat A$ is the estimated rotation matrix, for the best isotropic and anisotropic models obtained using the Vecchia and composite approaches. Notice that in the isotropic case, $\widehat A$ is simply the identity matrix, so that $d(\bm s_1,\bm s_2)$ is the Euclidean distance. While the models fitted using the Vecchia method capture the spatial extremal dependence very well at all distances, the fits are poor when using the composite likelihood approach, especially at long distances in the isotropic case. This strongly reinforces the benefits of using the Vecchia likelihood estimator.

\begin{figure}[!t]
	\centering
	\includegraphics[width=0.4\textwidth]{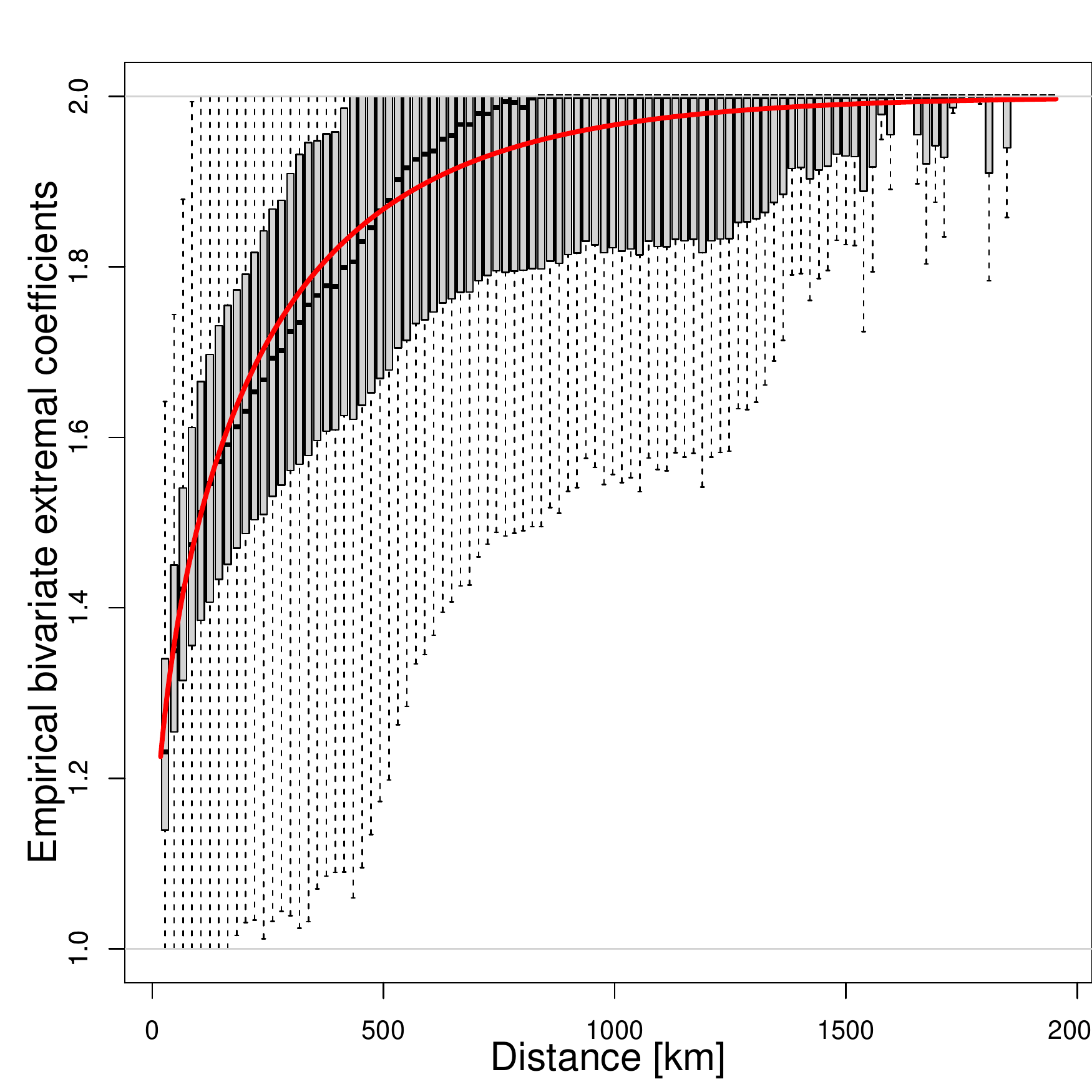}
	\includegraphics[width=0.4\textwidth]{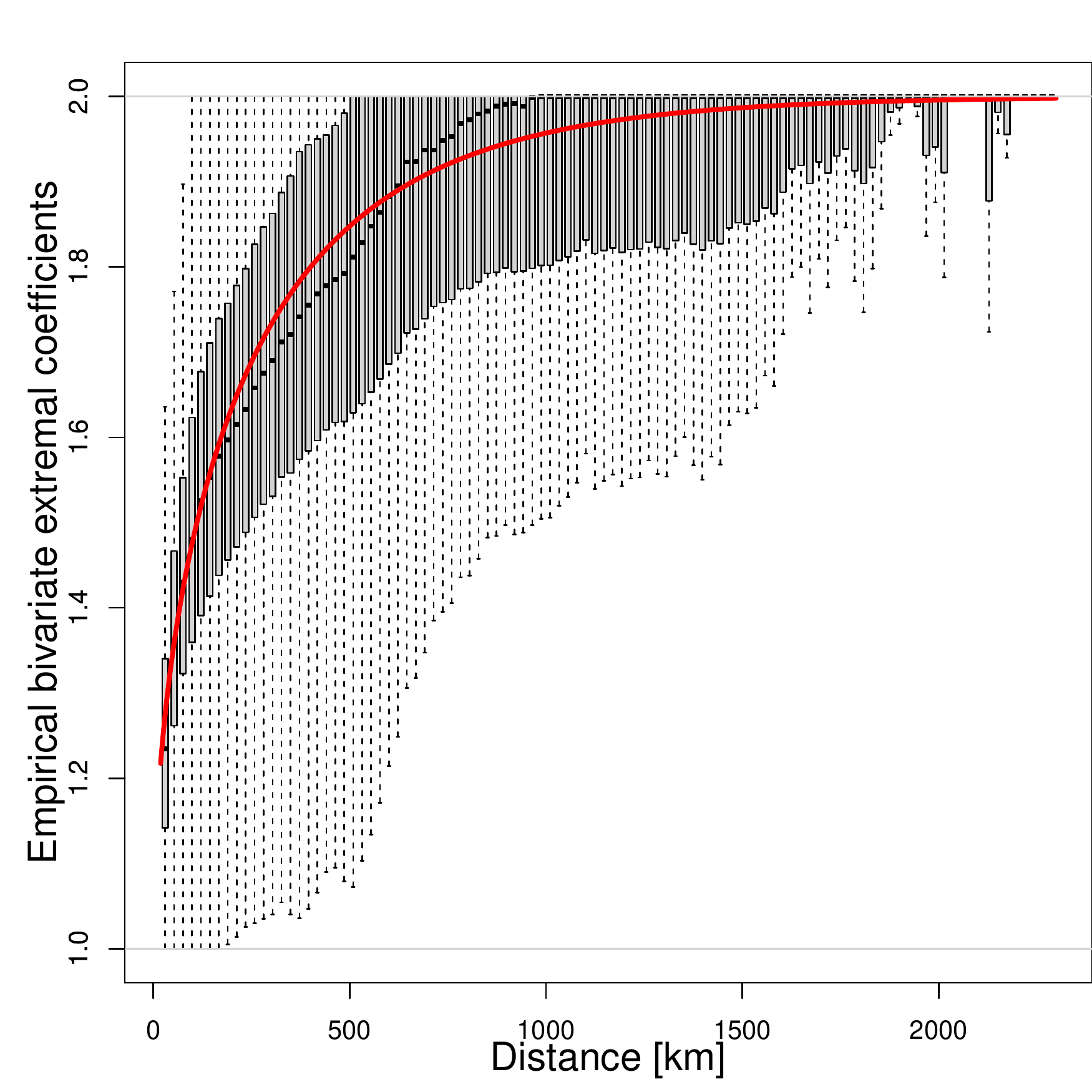}
	\includegraphics[width=0.4\textwidth]{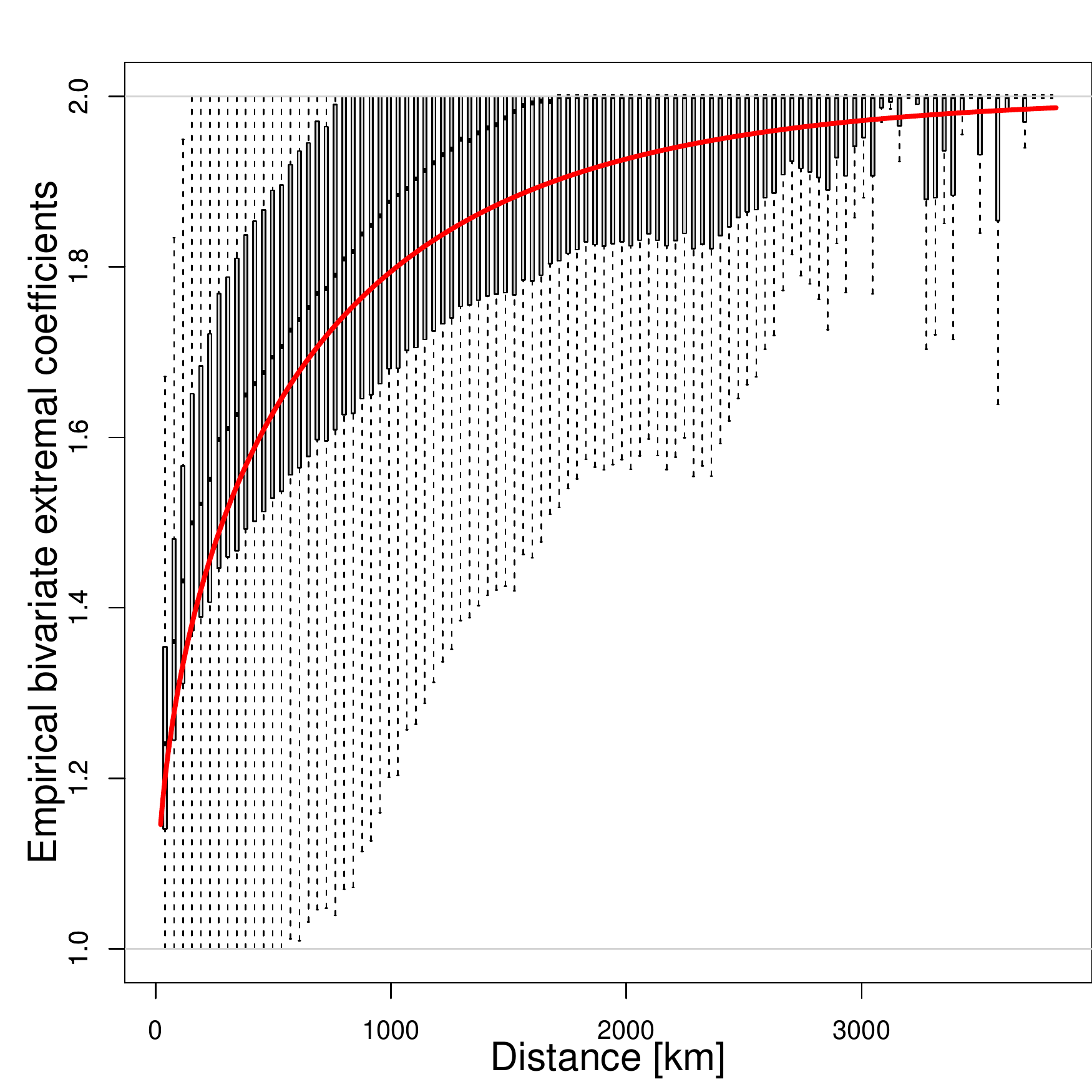}
	\includegraphics[width=0.4\textwidth]{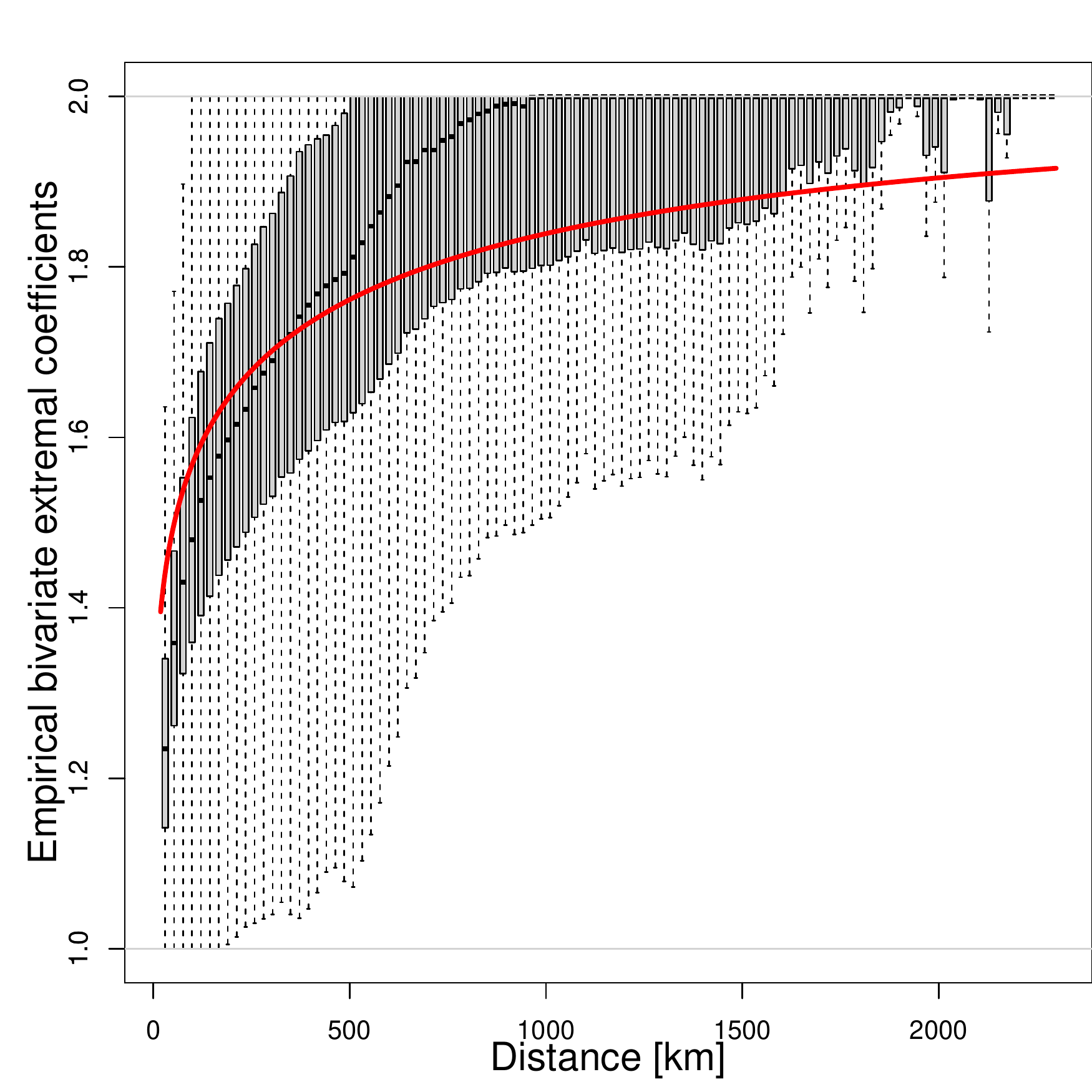}
	\caption{Binned bivariate empirical extremal coefficients (black boxplots), plotted as a function of the Mahalanobis distance $d(\bm s_1,\bm s_2)$, and their model-based counterparts (red curves) for the best anisotropic models (left) and isotropic models (right) obtained using the Vecchia likelihood approach (top) and traditional composite likelihood approach (bottom). The settings of these four ``best models'' can be read from Table~\ref{tab:scores}.}\label{fig:Ext_compare}
\end{figure}

We then also compare the performance of the best anisotropic models (for both Vecchia and composite likelihood approaches) by comparing empirical and fitted extremal coefficients along different directions, and for sub-datasets of different sizes. Specifically, in order to verify the stability of the fitted models, we fit them again using (approximately) $50\%$, $25\%$, $12.5\%$ and $6.25\%$ spread-out sites, chosen according to the maximum-minimum ordering, among the $1043$ sites from the complete dataset. Figure~\ref{fig:Ext_compare} shows plots of the estimated bivariate coefficients for direction-specific pairs of sites in the different sub-panels, plotted against the Euclidean distance between sites. More precisely, binned empirical estimates are compared with model-based estimates for $12$ prevailing directions, namely $15^\circ,30^\circ,\ldots,165^\circ$ (from the East direction in a counterclockwise manner). Figure~\ref{fig:Ext_compare} shows the results for six selected directions, and the Supplementary Material provides results for all $12$ directions. Again, the Vecchia likelihood estimator is able to deliver good and consistent performances in all cases, while the composite likelihood estimator fails completely for some specific directions (see, e.g., the sub-panels corresponding to $15^\circ$ or $75^\circ$). These differences again prove the superiority of the Vecchia method when compared to traditional composite likelihood methods.

\begin{figure}[!t]
	\centering
	 \includegraphics[width=0.9\textwidth]{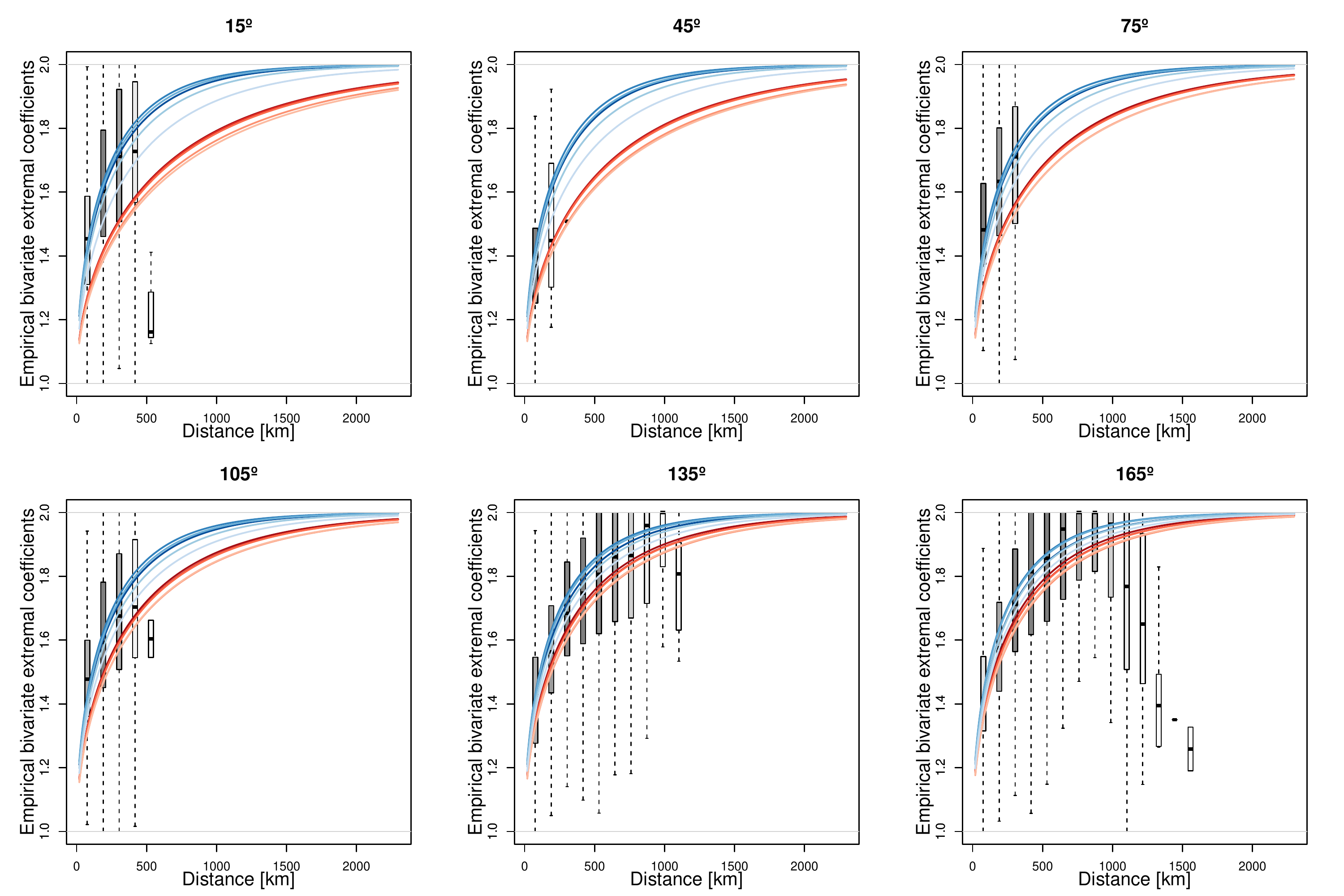}
	\caption{Binned empirical extremal coefficients (boxplots) and their model-based counterparts (colored curves) for different directions, i.e., $15^\circ,45^\circ,\ldots,165^\circ$ (subpanels from top left to bottom right), computed by fitting the best anisotropic models obtained with the Vecchia method (blue curves) and the composite likelihood method (red curves) based on sub-datasets of size $1,1/2, 1/4, 1/8, 1/16\times 100\%$ of the complete dataset (lightest to darkest color). The different shades of grey of the binned boxplots correspond to the number of data points used in each boxplot, with darker grey corresponding to more points.}
	\label{fig:Ext_compare}
\end{figure}

\section{Conclusion}\label{sec:conclusion}
In this paper, we have proposed a new fast and efficient inference method for max-stable processes based on the Vecchia likelihood approximation, which significantly outperforms traditional composite likelihood methods. Unlike pairwise likelihood methods proposed originally by \citet{Padoan.etal:2010} and later extended to higher-order truncated composite likelihoods by \citet{Castruccio.etal:2016} and others, the Vecchia method provides a valid likelihood approximation (i.e., it is itself the likelihood of a well-defined approximated process), thus giving theoretical guarantees to provide improved results, and the number of lower-dimensional likelihood terms involved in it remains linear in the data dimension $D$. Moreover, while it is difficult to choose the cutoff distance $\delta$ and the cutoff dimension $d$ optimally in truncated composite likelihoods, the performance of the Vecchia likelihood estimator is often only moderately sensitive to the choice of the permutation, and always improves as $d$ increases in the Gaussian and max-stable settings we have investigated. Therefore, overall, the Vecchia approximation method is uniformly better than traditional truncated composite likelihoods, be it in terms of statistical efficiency, computational efficiency, ease of implementation, and tuning of parameters. We verified this conclusion in various settings, based on (i) theoretical asymptotic relative efficiency calculations in the case of Gaussian processes, (ii) extensive simulations in the case of max-stable processes, as well as (iii) a substantial real data application to sea surface temperature extremes measured over the whole Red Sea at more than a thousand sites. Our results thus suggest that the superiority of the Vecchia likelihood estimator holds more generally and can be applied in other spatial contexts where the likelihood function is intractable or difficult to evaluate in high dimensions. Finally, while the cutoff dimension $d$ cannot be too big for popular max-stable processes such as the Brown--Resnick model, we have found that the Vecchia approximation method already provides satisfactory results for relatively small $d$, e.g., $d=3$ or $4$, providing a good trade-off between computational and statistical efficiency and major improvements compared to the pairwise likelihood case with $d=2$.

\section*{Acknowledgments}
This publication is based upon work supported by the King Abdullah University of Science and Technology (KAUST) Office of Sponsored Research (OSR) under Awards No. OSR-CRG2017-3434 and No. OSR-CRG2020-4394. Part of the effort of Michael L. Stein is based on work supported by the U.S. Department of Energy, Office of Science, Office of Advanced Scientific Computing Research (ASCR) under Contract DE-AC02-06CH11347. Support from the KAUST Supercomputing Laboratory is also gratefully acknowledged.

\baselineskip=24pt
\appendix
\section*{Appendix}
\section{General expressions for the asymptotic variance in the Gaussian case}\label{subsec:general}
We here derive the asymptotic variance of the composite likelihood estimator $\widehat{\boldsymbol{\psi}}_C$ in \eqref{eq:composite} for Gaussian processes. These general theoretical results are used in Section~\ref{sec:Gaussian} and the Supplementary Material to perform a formal efficiency comparison between the composite likelihood estimator of order $d$, $\widehat{\boldsymbol{\psi}}_{C;d}$, and the Vecchia likelihood estimator, $\widehat{\boldsymbol{\psi}}_{V;d}$, for different correlation models. Our detailed results extend those of \citet{Stein.etal:2004}. 

In order to calculate the asymptotic variance $\boldsymbol{V}=n^{-1}\boldsymbol{J}^{-1}(\boldsymbol{\psi}_0)\boldsymbol{K}(\boldsymbol{\psi}_0)\boldsymbol{J}^{-1}(\boldsymbol{\psi}_0)$, we need to derive the sensitivity matrix $\boldsymbol{J}(\boldsymbol{\psi})=\E\{-{\partial^2\over\partial\boldsymbol{\psi}\partial\boldsymbol{\psi}^\top} \ell_{C}(\boldsymbol{\psi};\boldsymbol{Z})\}$ and the variability matrix $\boldsymbol{K}(\boldsymbol{\psi})=\var\{{\partial\over\partial\boldsymbol{\psi}} \ell_{C}(\boldsymbol{\psi};\boldsymbol{Z})\}$. In case of the full likelihood estimator, we have $\boldsymbol{J}(\boldsymbol{\psi})=\boldsymbol{K}(\boldsymbol{\psi})$, thus the resulting asymptotic variance is $n^{-1}\boldsymbol{J}^{-1}(\boldsymbol{\psi}_0)$, and the expression is obtained by setting $w_{S}=1$ for $S=\{1,\ldots,D\}$ in \eqref{eq:composite} and all other weights to zero. Suppose now that $\boldsymbol{Z}$ has a multivariate normal distribution with zero mean and covariance matrix $\boldsymbol{\Sigma}(\boldsymbol{\psi})$. From \eqref{eq:composite}, and writing $\boldsymbol{\Sigma}_S(\boldsymbol{\psi})$ to denote the covariance matrix of the subvector $\boldsymbol{Z}_S$, it follows that
\begin{align*}
{\partial\over\partial\psi_i}\ell_{C}(\boldsymbol{\psi};\boldsymbol{Z})&=-{1\over2} \sum_{S\in C_D}w_S\left({\partial\over\partial\psi_i}\log |\boldsymbol{\Sigma}_S(\boldsymbol{\psi})|+\boldsymbol{Z}_S^\top{\partial\over\partial\psi_i}\boldsymbol{\Sigma}_S^{-1}(\boldsymbol{\psi})\boldsymbol{Z}_S\right),\\
-{\partial^2\over \partial\psi_i\partial\psi_j}\ell_{C}(\boldsymbol{\psi};\boldsymbol{Z})&= {1\over 2}\sum_{S\in C_D}w_S\left({\partial^2\over \partial\psi_i\psi_j}\log|\boldsymbol{\Sigma}_S(\boldsymbol{\psi})|+\boldsymbol{Z}_S^\top{\partial^2\over \partial\psi_i\psi_j}\boldsymbol{\Sigma}_S^{-1}(\boldsymbol{\psi})\boldsymbol{Z}_S\right),
\end{align*}
for all $i,j=1,\ldots,m$. Now, because the trace is a linear and cyclic operator, we have that 
\begin{align*}
\E\left\{\boldsymbol{Z}_S^\top{\partial^2\over \partial\psi_i\psi_j}\boldsymbol{\Sigma}_S^{-1}(\boldsymbol{\psi})\boldsymbol{Z}_S\right\}&={\rm tr}\left[\E\left\{\boldsymbol{Z}_S^\top{\partial^2\over \partial\psi_i\psi_j}\boldsymbol{\Sigma}_S^{-1}(\boldsymbol{\psi})\boldsymbol{Z}_S\right\}\right]=\E\left[{\rm tr}\left\{\boldsymbol{Z}_S^\top{\partial^2\over \partial\psi_i\psi_j}\boldsymbol{\Sigma}_S^{-1}(\boldsymbol{\psi})\boldsymbol{Z}_S\right\}\right]\\
&=\E\left[{\rm tr}\left\{\boldsymbol{Z}_S\boldsymbol{Z}_S^\top{\partial^2\over \partial\psi_i\psi_j}\boldsymbol{\Sigma}_S^{-1}(\boldsymbol{\psi})\right\}\right]={\rm tr}\left\{\boldsymbol{\Sigma}_S(\boldsymbol{\psi}){\partial^2\over \partial\psi_i\psi_j}\boldsymbol{\Sigma}_S^{-1}(\boldsymbol{\psi})\right\}.
\end{align*}
This implies that the $(i,j)$th entry of the sensitivity matrix is
\begin{equation}\label{eq:J}
\boldsymbol{J}_{i,j}(\boldsymbol{\psi})={1\over 2}\sum_{S\in C_D}w_S\left[{\partial^2\over \partial\psi_i\psi_j}\log|\boldsymbol{\Sigma}_S(\boldsymbol{\psi})|+{\rm tr}\left\{\boldsymbol{\Sigma}_S(\boldsymbol{\psi}){\partial^2\over \partial\psi_i\psi_j}\boldsymbol{\Sigma}_S^{-1}(\boldsymbol{\psi})\right\}\right].
\end{equation}
Moreover, thanks to the Gaussianity assumption, we have that
$$\cov\left\{\boldsymbol{Z}_{S_1}^\top{\partial\over\partial\psi_i}\boldsymbol{\Sigma}_{S_1}^{-1}(\boldsymbol{\psi})\boldsymbol{Z}_{S_1},\boldsymbol{Z}_{S_2}^\top{\partial\over\partial\psi_j}\boldsymbol{\Sigma}_{S_2}^{-1}(\boldsymbol{\psi})\boldsymbol{Z}_{S_2}\right\}=2\;{\rm tr}\left\{{\partial\over\partial\psi_i}\boldsymbol{\Sigma}_{S_1}^{-1}(\boldsymbol{\psi})\boldsymbol{\Sigma}_{S_1,S_2}{\partial\over\partial\psi_j}\boldsymbol{\Sigma}_{S_2}^{-1}(\boldsymbol{\psi})\boldsymbol{\Sigma}_{S_2,S_1}\right\},$$
where $\boldsymbol{\Sigma}_{S_1,S_2}$ is the covariance matrix between the random subvectors $\boldsymbol{Z}_{S_1}$ and $\boldsymbol{Z}_{S_2}$, and $\boldsymbol{\Sigma}_{S_2,S_1}=\boldsymbol{\Sigma}_{S_1,S_2}^\top$. Therefore, the $(i,j)$th entry of the variability matrix is
\begin{equation}\label{eq:K}
\boldsymbol{K}_{i,j}(\boldsymbol{\psi})={1\over 2}\sum_{S_1\in C_D}\sum_{S_2\in C_D}w_{S_1}w_{S_2}\;{\rm tr}\left\{{\partial\over\partial\psi_i}\boldsymbol{\Sigma}_{S_1}^{-1}(\boldsymbol{\psi})\boldsymbol{\Sigma}_{S_1,S_2}{\partial\over\partial\psi_j}\boldsymbol{\Sigma}_{S_2}^{-1}(\boldsymbol{\psi})\boldsymbol{\Sigma}_{S_2,S_1}\right\}.
\end{equation}
Expressions \eqref{eq:J} and \eqref{eq:K} involve derivatives of the log determinant and the inverse covariance matrix, which may be conveniently expressed for all $i,j=1,\ldots,m$ as
\begin{align*}
{\partial\over\partial\psi_i}\log|\boldsymbol{\Sigma}_{S}(\boldsymbol{\psi})|&={\rm tr}\left\{\boldsymbol{\Sigma}_{S}^{-1}(\boldsymbol{\psi}){\partial\over\partial\psi_i}\boldsymbol{\Sigma}_{S}(\boldsymbol{\psi})\right\};\\
{\partial^2\over\partial\psi_i\partial\psi_j}\log|\boldsymbol{\Sigma}_{S}(\boldsymbol{\psi})|&={\rm tr}\left\{-\boldsymbol{\Sigma}_{S}^{-1}(\boldsymbol{\psi}){\partial\over\partial\psi_i}\boldsymbol{\Sigma}_{S}(\boldsymbol{\psi})\boldsymbol{\Sigma}_{S}^{-1}(\boldsymbol{\psi}){\partial\over\partial\psi_j}\boldsymbol{\Sigma}_{S}(\boldsymbol{\psi}) + \boldsymbol{\Sigma}_{S}^{-1}(\boldsymbol{\psi}){\partial^2\over\partial\psi_i\partial\psi_j}\boldsymbol{\Sigma}_{S}(\boldsymbol{\psi})\right\};\\
{\partial\over\partial\psi_i}\boldsymbol{\Sigma}_{S}^{-1}(\boldsymbol{\psi})&=-\boldsymbol{\Sigma}_{S}^{-1}(\boldsymbol{\psi}){\partial\over\partial\psi_i}\boldsymbol{\Sigma}_{S}(\boldsymbol{\psi})\boldsymbol{\Sigma}_{S}^{-1}(\boldsymbol{\psi});\\
{\partial^2\over \partial\psi_i\partial\psi_j}\boldsymbol{\Sigma}_{S}^{-1}(\boldsymbol{\psi})&=\boldsymbol{\Sigma}_{S}^{-1}(\boldsymbol{\psi}){\partial\over\partial\psi_i}\boldsymbol{\Sigma}_{S}(\boldsymbol{\psi})\boldsymbol{\Sigma}_{S}^{-1}(\boldsymbol{\psi}){\partial\over\partial\psi_j}\boldsymbol{\Sigma}_{S}(\boldsymbol{\psi})\boldsymbol{\Sigma}_{S}^{-1}(\boldsymbol{\psi}) \\
& + \boldsymbol{\Sigma}_{S}^{-1}(\boldsymbol{\psi}){\partial\over\partial\psi_j}\boldsymbol{\Sigma}_{S}(\boldsymbol{\psi})\boldsymbol{\Sigma}_{S}^{-1}(\boldsymbol{\psi}){\partial\over\partial\psi_i}\boldsymbol{\Sigma}_{S}(\boldsymbol{\psi})\boldsymbol{\Sigma}_{S}^{-1}(\boldsymbol{\psi}) \\
& - \boldsymbol{\Sigma}_{S}^{-1}(\boldsymbol{\psi}){\partial^2\over \partial\psi_i\psi_j}\boldsymbol{\Sigma}_{S}(\boldsymbol{\psi})\boldsymbol{\Sigma}_{S}^{-1}(\boldsymbol{\psi}).
\end{align*}



\newpage

\baselineskip=16pt

\bibliographystyle{CUP}
\bibliography{Total}



\end{document}